\documentclass[10pt,conference]{IEEEtran}
\IEEEoverridecommandlockouts

% Preamble
% The preceding line is only needed to identify funding in the first footnote. If that is unneeded, please comment it out.
\usepackage{cite}
\usepackage{amsmath,amssymb,amsfonts,amsthm}
\usepackage{relsize} % bigger summations
% Do not import algorithmic. For some reason it causes errors.
\usepackage{algorithm}
\usepackage{algpseudocode}
\usepackage{graphicx}
\usepackage{textcomp}
\usepackage{booktabs}
\usepackage{xspace}
\usepackage[dvipsnames]{xcolor}
\usepackage{subcaption}
\usepackage{listings}
\usepackage{gensymb}
\usepackage{wrapfig}
\usepackage{tabularx} % for better table formatting
\usepackage{booktabs} % for horizontal rules
\usepackage{makecell} % for vertical alignment
\usepackage{eso-pic}  % for the "do not distribute" text
\def\BibTeX{{\rm B\kern-.05em{\sc i\kern-.025em b}\kern-.08em
    T\kern-.1667em\lower.7ex\hbox{E}\kern-.125emX}}

%% TikZ
\usepackage{tikz}
\tikzset{SmallBox/.style={draw,rectangle,minimum width=1.8cm,minimum height=.5cm,font=\scriptsize}}
\usetikzlibrary{
    decorations,
    patterns,
    decorations.pathmorphing,
    automata,
    positioning,
    backgrounds,
    fit,
    chains,
    scopes,
    shapes,
    matrix,
    shadows,
    arrows.meta,
    calc,
}

%%% A useful document shape token from https://tex.stackexchange.com/a/103691
% taken from manual
\makeatletter
\pgfdeclareshape{document}{
\inheritsavedanchors[from=rectangle] % this is nearly a rectangle
\inheritanchorborder[from=rectangle]
\inheritanchor[from=rectangle]{center}
\inheritanchor[from=rectangle]{north}
\inheritanchor[from=rectangle]{south}
\inheritanchor[from=rectangle]{west}
\inheritanchor[from=rectangle]{east}
% ... and possibly more
\backgroundpath{% this is new
% store lower right in xa/ya and upper right in xb/yb
\southwest \pgf@xa=\pgf@x \pgf@ya=\pgf@y
\northeast \pgf@xb=\pgf@x \pgf@yb=\pgf@y
% compute corner of ‘‘flipped page’’
\pgf@xc=\pgf@xb \advance\pgf@xc by-10pt % this should be a parameter
\pgf@yc=\pgf@yb \advance\pgf@yc by-10pt
% construct main path
\pgfpathmoveto{\pgfpoint{\pgf@xa}{\pgf@ya}}
\pgfpathlineto{\pgfpoint{\pgf@xa}{\pgf@yb}}
\pgfpathlineto{\pgfpoint{\pgf@xc}{\pgf@yb}}
\pgfpathlineto{\pgfpoint{\pgf@xb}{\pgf@yc}}
\pgfpathlineto{\pgfpoint{\pgf@xb}{\pgf@ya}}
\pgfpathclose
% add little corner
\pgfpathmoveto{\pgfpoint{\pgf@xc}{\pgf@yb}}
\pgfpathlineto{\pgfpoint{\pgf@xc}{\pgf@yc}}
\pgfpathlineto{\pgfpoint{\pgf@xb}{\pgf@yc}}
\pgfpathlineto{\pgfpoint{\pgf@xc}{\pgf@yc}}
}
}
\makeatother

%% Comments
\newboolean{showcomments}
\setboolean{showcomments}{true} %set false for final submission
\makeatletter
\newcommand{\mynote}[3]{%
  \ifthenelse{\boolean{showcomments}}{%
   \fbox{\bfseries\sffamily\scriptsize#1}%
   {
   \small$\blacktriangleright$
   \textsf{\emph{\color{#3}{#2}}}
   $\blacktriangleleft$
   }
   }%
  {%
   % these two lines ensure that there is no blank space inserted
   \@bsphack
   \@esphack
  }%
}
\makeatother

\definecolor{asparagus}{rgb}{0.53, 0.66, 0.42}

\newcommand{\code}[1]{{\lstinline[basicstyle=\smaller\ttfamily]{#1}}}%
\newcommand{\opcode}[1]{{\lstinline[basicstyle=\smaller\ttfamily]{#1}}}%
\newcommand{\register}[1]{{\lstinline[basicstyle=\smaller\ttfamily]{#1}}}%
\newcommand{\reactor}[1]{{\lstinline[basicstyle=\smaller\sffamily]{#1}}}%
\newcommand{\port}[1]{{\lstinline[basicstyle=\smaller\sffamily]{#1}}}%
\newcommand{\us}[1]{$#1 \mu s$}%
\newcommand{\s}[1]{$#1 s$}%
\newcommand{\trigger}[1]{{\lstinline[basicstyle=\smaller\sffamily]{#1}}}%
\newcommand{\term}[1]{\textbf{#1}}%

\newcommand{\pretvm}[0]{\textsc{PretVM}\xspace}

\newcommand{\lflong}[0]{\textsc{Lingua Franca}\xspace}
\newcommand{\lf}[0]{\textsc{LF}\xspace}
\newcommand{\lb}[0]{\textsc{Load Balanced}\xspace}
\newcommand{\egs}[0]{\textsc{Edge Generation}\xspace}

\newtheorem{theorem}{Theorem}[section]

\newtheorem{lemma}[theorem]{Lemma}
\newtheorem{example}{Example}
\theoremstyle{definition}
\newtheorem{definition}{Definition}[section]

%% Abbreviations
\newcommand{\eg}[0]{\textit{e.g.,}\xspace}
\newcommand{\ie}[0]{\textit{i.e.,}\xspace}
\newcommand{\wrt}[0]{\textit{w.r.t.}\xspace}

\begin{document}

% \AddToShipoutPicture{%
%     \AtPageUpperLeft{%
%         \raisebox{-\height}{%
%             \raisebox{-1cm}{% Move the text down by 2cm
%                 \makebox[\paperwidth]{%
%                     \begin{minipage}{\textwidth}
%                         \centering
%                         \textbf{THIS IS AN UNPUBLISHED DRAFT.\\PLEASE DO NOT DISTRIBUTE.}
%                     \end{minipage}
%                 }
%             }
%         }
%     }
% }

\title{\pretvm: Predictable, Efficient Virtual Machine for Real-Time Concurrency
\thanks{The work in this paper was supported in part by the National Science Foundation
(NSF), award \#CNS-2233769 (Consistency vs. Availability in Cyber-Physical
Systems), by DARPA grant FA8750-20-C-0156, by Intel, and by the iCyPhy Research
Center (Industrial Cyber-Physical Systems), supported by Denso, Siemens, and
Toyota. This work was also supported, in part, by the German Federal Ministry of
Education and Research (BMBF) as part of the program “Souverän. Digital.
Vernetzt.”, joint project 6G-life (16KISK001K), and by the German Research
Council (DFG) through the InterMCore project (505744711).Mirco Theile and Binqi
Sun were supported by the Chair for Cyber-Physical Systems in Production
Engineering at TUM.}}

%% Use this for anonymous submission.
% \author{\IEEEauthorblockN{Anonymous Authors}}

%% Enable line breaks in the author field.
\def\BibTeX{{\rm B\kern-.05em{\sc i\kern-.025em b}\kern-.08em
    T\kern-.1667em\lower.7ex\hbox{E}\kern-.125emX}}
    \makeatletter
\newcommand{\linebreakand}{%
  \end{@IEEEauthorhalign}
  \hfill\mbox{}\par
  \mbox{}\hfill\begin{@IEEEauthorhalign}
}
\makeatother

%% Comment out below for anonymous submission.
\author{\IEEEauthorblockN{Shaokai Lin}
\IEEEauthorblockA{
\textit{UC Berkeley}\\
Berkeley, USA}
\and
\IEEEauthorblockN{Erling Jellum}
\IEEEauthorblockA{
\textit{NTNU}\\
Trondheim, Norway}
\and
\IEEEauthorblockN{Mirco Theile}
\IEEEauthorblockA{
\textit{TU Munich}\\
Munich, Germany}
\and
\IEEEauthorblockN{Tassilo Tanneberger}
\IEEEauthorblockA{
\textit{TU Dresden}\\
Dresden, Germany}
\and
\IEEEauthorblockN{Binqi Sun}
\IEEEauthorblockA{
\textit{TU Munich}\\
Munich, Germany}
\linebreakand
\IEEEauthorblockN{Chadlia Jerad}
\IEEEauthorblockA{
\textit{University of Manouba}\\
Manouba, Tunisia}
\and
\IEEEauthorblockN{Ruomu Xu}
\IEEEauthorblockA{
\textit{UC Berkeley}\\
Berkeley, USA}
\and
\IEEEauthorblockN{Guangyu Feng}
\IEEEauthorblockA{
\textit{UC Berkeley}\\
Berkeley, USA}
\and
\IEEEauthorblockN{Christian Menard}
\IEEEauthorblockA{
\textit{TU Dresden}\\
Dresden, Germany}
\linebreakand
\IEEEauthorblockN{Marten Lohstroh}
\IEEEauthorblockA{
\textit{UC Berkeley}\\
Berkeley, USA}
\and
\IEEEauthorblockN{Jeronimo Castrillon}
\IEEEauthorblockA{
\textit{TU Dresden}\\
Dresden, Germany}
\and
\IEEEauthorblockN{Sanjit Seshia}
\IEEEauthorblockA{
\textit{UC Berkeley}\\
Berkeley, USA}
\and
\IEEEauthorblockN{Edward Lee}
\IEEEauthorblockA{
\textit{UC Berkeley}\\
Berkeley, USA}
}

\maketitle

\begin{abstract}
This paper introduces the Precision-Timed Virtual Machine (\pretvm), an
intermediate platform facilitating the execution of quasi-static schedules
compiled from a subset of programs written in the Lingua Franca (\lf)
coordination language.  The subset consists of those programs that in principle
should have statically verifiable and predictable timing behavior.  The \pretvm
provides a schedule with well-defined worst-case timing bounds. 
The \pretvm provides a clean separation between \emph{application} logic and
\emph{coordination} logic, yielding more analyzable program executions.
Experiments
compare the \pretvm against the default (more dynamic) LF scheduler and show that
it delivers time-accurate deterministic execution. 
\end{abstract}

\begin{IEEEkeywords}
Cyber-Physical Systems, Real-Time Systems, DAG Scheduling, Virtual Machine, Concurrency
\end{IEEEkeywords}

\section{Introduction}
When developing real-time software for Cyber-Physical Systems (CPSs), the conventional strategy often adopts a
bottom-up approach, where the emphasis is first placed on defining the "how"—for
instance, using a preemptive EDF scheduler—in hopes of realizing the "what",
i.e., the desired timing behavior.  
This bottom-up process relies heavily on
experimental feedback to find out whether a system is able to meet system-level
timing constraints. Iteratively, code optimizations are carried out until the
desired behavior is achieved. 
Yet, the bottom-up method complicates code
validation and upgrades, and it renders implementations platform-specific
and only works well in systems 
where multiple runs of the same code yield very similar timing behavior. 
Code obtained through such 
a process, therefore, tends to not be useable on different platforms and is
challenging to prove safe.
As Cyber-Physical Systems become more software-defined (e.g., in the automotive
industry~\cite{liu2022impact}) and increasingly often implemented using
multi-core and heterogeneous hardware, the bottom-up approach becomes less
feasible and attractive.

Henzinger and Kirsch~\cite{Henzinger:03:Giotto} observe that a lot of
these difficulties are due to an entanglement of concerns between
``reactivity,'' which refers to real-time interaction with a physical
environment, ``schedulability,'' which concerns real-time execution in a
specific execution environment, and ``functionality,'' such as control laws and
device drivers. The Giotto language and methodology they developed, which
introduced the concept of Logical Execution Time (LET), meant specifically
to help programmers separate these concerns. A Giotto program explicitly
specifies the exact real-time interaction of software components with the
physical world, leaving it to the Giotto compiler (not the software or control
engineer) to generate timing code that ensures the specified timing behavior on a given
platform. The timing code, known as E code, encodes Giotto's LET semantics and
runs on the Embedded Machine, a virtual machine that makes E code portable~\cite{Henzinger:02:EMachine}.

The recently emerged Lingua Franca (LF)~\cite{LohstrohEtAl:21:Towards,lohstroh:tecs23}
coordination language and its underlying reactor model~\cite{Lohstroh:2019:CyPhy} can be characterized as a
generalization of Giotto and
LET~\cite{LeeLohstroh:22:LET}. It takes the separation between reactivity and
schedulability a step further by decoupling the passage of physical time from
logical time, except when interactions with the physical world are involved
(through timers, sensors, and actuators)~\cite{LohstrohEtAl:20:LF}.
In part because LF is not strictly
task-based and allows sporadic events, the scheduling of LF programs is carried
out dynamically, at runtime. While the timing behavior is specified in the LF
code, alignment with physical time is best-effort and without hard guarantees
that could be derived based on a quasi-static schedule.

This paper describes an augmentation of the LF toolchain that generates
quasi-static schedules.
We do this through a compiler extension, runtime integration, and virtual machine abstraction called the \term{Precision-Timed Virtual Machine},
or in short, \pretvm. The functionality that \pretvm
offers to LF is similar to what the E Machine~\cite{Henzinger:02:EMachine}
provides for LET. \pretvm enables a methodology for obtaining predetermined
temporal behavior for the subset of LF programs comprising sequences of
periodic task sets.
Based on a directed acyclic graph (DAG) representation of the program, a 
\term{partitioned quasi-static schedule} can be derived using different schedulers. 
This quasi-static schedule, after being translated into our \pretvm bytecode for each partition, can be executed using multiple worker threads. 
We show that our approach exhibits predictable timing and can leverage existing work in multi-core mapping and
scheduling, such as Edge Generation Scheduling for DAG Tasks~\cite{sun2023edge}.

In this work, we focus on a subset of \lf programs whose behavior can be
statically analyzed by our approach. We exclude ``dynamic'' features of \lf, such
as actions and physical connections, and save them for future work.

\paragraph*{Contributions}
In this paper, we present the following key contributions:
\begin{enumerate}
    \item We present the Precision-Timed Virtual Machine, \pretvm, as an intermediary
    platform supporting a top-down, model-based design flow for real-time
    software in Cyber-Physical Systems.
    \item We provide a methodology to statically schedule and compile a subset
    of \lf into \pretvm bytecode.
    \item We show that our approach is amenable to predicting
    the worst-case execution time (WCET) of systems' hyperperiods.
    \item We evaluate our approach using a set of \lf benchmarks evaluating our methodology \wrt
    timing accuracy. Results show that
    \pretvm delivers time-accurate deterministic execution.
\end{enumerate}

\section{Running Example}\label{sec:running_example}
\subsection{Reaction Wheel System}
\begin{figure}[h!]
  \centering
  \includegraphics[width=\columnwidth]{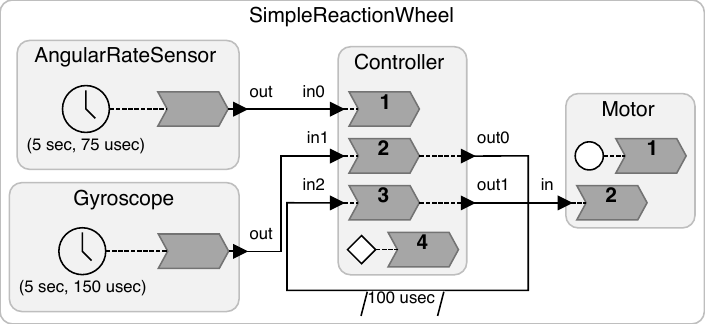}
  \caption{A simple reaction wheel system}
  \label{fig:example}
\end{figure}

To informally introduce the reactor model, we present a simplified example
of a reaction wheel system used for satellite attitude control~\cite{cardoso2022reactionwheel}. In this
system, a gyroscope and an angular rate sensor periodically collect data from the environment and
send them to a controller, which performs sensor fusion and generates a command
the motor.

The diagram in Figure~\ref{fig:example} represents the
simplified reaction wheel system. A \term{reactor} is a stateful concurrent component.
In the diagram, reactors are represented as rectangles with rounded corners, and
they are \reactor{Gyroscope}, \reactor{AngularRateSensor}, \reactor{Controller},
and \reactor{Motor}. A reactor can
contain \term{reactions}, \term{timers}, \term{ports},
and \term{connections}.

A reaction is a routine that executes when any of its triggers is present, and
it is rendered as a chevron in the diagram. The body of a reaction is written in
a \term{target language} and represents the ``business logic'' of the
application. 
In this work, each reaction is further annotated with its worst-case execution
time (WCET) for a given platform.
The \reactor{Controller} reactor has four reactions, each labelled with an integer number. These numbers correspond to the \term{priorities} of the
reactions within \reactor{Controller}, which determine the order in which reactions
execute should they be triggered simultaneously.
It is important to note that reaction
priorities establish a total order only \emph{within} the same reactor.

A reaction can be triggered by a port, a timer, an action, or a built-in trigger
such as \trigger{startup} or \trigger{shutdown}. A port, rendered as a black
triangle in the diagram, is used to communicate with other reactors. A timer,
rendered as a clock in the diagram, is used to generate periodic events. Timers
are characterized by an offset and a period which are shown below the clock. For
example, the timer in \reactor{AngularRateSensor} starts firing at \s{5} with 
a period of \s{75}.

Around the \reactor{Controller} reactor, a connection is drawn from its output port \port{out0} to its input port \port{in2},
with a delay of \us{100}, effectively scheduling an  event within itself
\us{100} into the future. The delay
annotated to connections is optional. If no delay is specified, the event
travels along a connection logically instantaneously.

The built-in trigger \trigger{startup}, rendered as a white circle, is present at
the beginning of the execution, and the built-in trigger \trigger{shutdown},
rendered as a white diamond, is present at the last tag of the execution.
Besides triggers, connections can be drawn between two ports using
connection statements.

All events in the system are handled in the order of \term{tags}, which use a
super-dense representation of time~\cite{manna1993verifying}, encoded as
tuples of the form $g=(t, m)$, where $t \in \mathbb{T}$ is the \term{time value}
and $m \in \mathbb{N}$ is a \term{microstep index}. Reactions are logically
instantaneous; \term{logical time} does not elapse during the execution of a
reaction (\term{physical time}, denoted as $T$, on the other hand, does elapse). If a reaction
produces an output that triggers another reaction, then the two reactions execute
logically simultaneously (\ie at the same tag).

\section{Background}

\subsection{Reactor Model and Lingua Franca}
The reactor model~\cite{Lohstroh:2019:CyPhy} is a deterministic
model of concurrent computation for reactive systems. It is an adaptation of the
actor model~\cite{DBLP:conf/ijcai/HewittBS73} in which messages flow through
ports, and delivery is not address-based but connection-based, like in
dataflow~\cite{Dennis:74:Dataflow,NajjarEtAl:99:Dataflow} and Kahn process
networks~\cite{Kahn:74:PN}. Messages are tagged with a timestamp and delivered
in order, assigning a discrete-event semantics to the operation of reactors.
Unlike LET~\cite{Henzinger:01:Giotto}, in which logical execution time is
synonymous with physical time, reactions to events are logically instantaneous,
following the synchronous hypothesis that is central to the synchronous
languages~\cite{Benveniste:91:Synchronous}. Still, a LET can be assigned to a
reaction, simply by adding a logical delay to its output. The execution
semantics of reactors ensure that at any tag, no reaction executes before all
events that it depends on are known. This invariant is independent of the
physical time that elapses during the execution of a reactor program, and it
ensures that reactors behave deterministically. The proximity between the tag at
which an event is scheduled and the physical time at which it is reacted to,
\emph{does} depend on execution times, and requires further analysis in order to
provide any assurances in that regard.

LF is a polyglot coordination language that allows for the definition and
composition of reactors. Event-triggered reactions, which constitute the
functionality of reactors, have an interface specified in LF syntax, but an
implementation specified in plain target code. Currently, LF supports C, C++,
Python, Rust, and TypeScript. LF code gets compiled down to target code that is
executable using a runtime environment that coordinates the scheduling of events
and reactions to events. For most targets, the runtime transparently exploits
available parallelism and utilizes multiple cores. LF has support for multiple
platforms, including POSIX, Arduino, Zephyr, and several bare-iron
microcontrollers. LF programs can also be federated, meaning that the reactor
semantics get preserved in a distributed system, across separate processes that
communicate over a network. Such federated execution is outside the scope of
this paper.

\subsection{Directed Acyclic Graphs (DAGs)}
DAGs are widely used in real-time applications, such as automotive and avionics, to model real-time computing tasks and their precedence constraints. This subsection introduces basic concepts and notations for real-time DAG task modeling. 
\subsubsection{Task Model}
A DAG task is characterized by $(\mathcal{G}, P)$, in which $\mathcal{G}$ is a graph defining the set of sub-tasks and $P$ denotes the DAG task period. The graph $\mathcal{G}$ comprises of $(\mathcal{V},\mathcal{E})$, where $\mathcal{V} = \{v_i\}$ is a set of $n$ nodes representing $n$ sub-tasks, and $\mathcal{E} = \{e_{ij}\}$ is a set of directed edges representing the precedence relation between the sub-tasks. 
For any two nodes $v_i$ and $v_j$ connected by a directed edge $e_{ij}$, $v_j$ can start execution only if $v_i$ has finished its execution. Given the edge $e_{ij}$, $v_i$ is the predecessor of $v_j$. Each sub-task $v_i$ is a non-preemptable sequential computing workload, with its worst-case execution time (WCET) denoted as $C_i$.

\subsubsection{Node-level Timing Attributes}
The timing constraints of each node can be defined through four attributes: earliest starting time (EST), earliest finishing time (EFT), latest starting time (LST), and latest finishing time (LFT). The EST is the earliest time a node can start executing, equalling the maximum of its predecessors' EFTs. Similarly, the LFT defines the latest time a node is allowed to finish its execution to meet the deadline, which is equal to the minimum of its successors' LST. The relationship between the timing attributes is
\begin{equation}
    \begin{aligned}
        t^\text{EST}_i &= \max\{t^\text{EFT}_j ~\vert~ e_{ji}\in\mathcal{E}\}\\
        t^\text{LFT}_i &= \min\{t^\text{LST}_j ~\vert~ e_{ij}\in\mathcal{E}\}
    \end{aligned}
\end{equation}
and $t^\text{EFT}_i = t^\text{EST}_i + C_i$ and $t^\text{LST}_i = t^\text{LFT}_i - C_i$. The
 $t^\text{LFT}$ of all nodes without a successor is set to the DAG task period~$P$.

\subsection{DAG Modeling for LF Programs}
When constructing the DAG task for DAG scheduling it is necessary that all
externally defined constraints are represented through the DAG topology and the
scalar node values. A real-time \lf program releases a task $v_i$ (\ie a
reaction invocation) at a time offset $O_i$ and expects it to complete by
a deadline $D_i$, where $D_i$ could be the release time of the reaction's next 
invocation or the end of the hyperperiod. The DAG needs to be adapted to reflect
these constraints, which are translated to constraints on node-level
timing attributes as  
\begin{equation}\label{eq:timing_req}
    O_i \leq t^\text{EST}_i  \quad \text{and} \quad    t^\text{LFT}_i \leq D_i,
\end{equation}
i.e., the earliest starting time needs to be larger than the offset and the last
finishing time needs to be less than the deadline. To constrain the timing
attributes accordingly, we borrow the concept of \term{virtual nodes}
from~\cite{verucchi2020latency}. There are two types of virtual nodes,
\term{dummy nodes} $d_l$ that have a predefined execution time and
\term{sync nodes} $s_l$ with zero execution time. Using the virtual nodes, a
\term{virtual path} can be constructed, alternating $S$ sync and $S-1$ dummy
nodes. 
The total execution time of the virtual path is equal to the period of
the DAG task. Through the defined execution time of the dummy nodes, it follows
that for each sync node $s_l$, 
\begin{equation}
    t^\text{EST}_{s_l} = t^\text{EFT}_{s_l} = t^\text{LST}_{s_l} = t^\text{LFT}_{s_l} = \sum_{i=1}^{l-1} C_{d_i} = P - \sum_{j=l}^{S-1} C_{d_j}.
\end{equation}
The nodes in the virtual path can be designed such that all unique offset and
deadline values are represented by a sync node. Adding an edge from the
appropriate sync node $s_l$ to a sub-task $v_k$, or vice-versa, respectively,
results in the relation  
\begin{equation}
    O_k = t^\text{EFT}_{s_l} \leq t^\text{EST}_k   \quad \text{or} \quad   t^\text{LFT}_{k} \leq t^\text{LST}_{s_l} = D_k,
\end{equation}
which enforces \eqref{eq:timing_req} within the DAG task.
As a concrete example, Figure~\ref{fig:dags_gen} shows the DAG encoding of the
running example.

\definecolor{external}{RGB}{223, 203, 153}
\definecolor{frontEnd}{RGB}{191, 213, 233}
\definecolor{backEnd}{RGB}{196, 226, 185}

\tikzstyle{doc}=[%
draw,
thin,
color=black,
shape=document,
minimum width=1cm,
minimum height=1.5cm,
inner sep=2ex,
]

\tikzstyle{tool}=[
SmallBox,
thick,
rounded corners=5pt,
minimum width=8.6cm,
minimum height=7cm,
text width=1.2cm,
align=center
]

\begin{figure}[t]
  \begin{tikzpicture}[node distance=1cm,
    prior/.style={rectangle, draw, text centered, minimum height=1cm}, 
    box/.style={rectangle, draw, text centered, minimum height=1cm},
    innerbox/.style={rectangle, draw, inner sep=1em},
    decision/.style={diamond, draw, text badly centered, inner sep=-2pt, aspect=1.3},
    arrow/.style={thick,-{Latex[scale=1.2]}}
]
    % Nodes
    \node [tool] (compiler) {};
    \node[font=\normalsize, above=0.1cm of compiler.south, xshift=-1cm] (l)  {\lflong \textsc{Compiler}};

    \node [prior, below=0.4cm of compiler.north, xshift=-3.6cm, align=center] (parser) {LF\\Parser};
    \node [prior, right=of parser, align=center] (ast) {AST\\Builder};
    \node [prior, right=of ast, align=center] (cgen) {C\\Generator};
    \node [box, fill=frontEnd, below=1.8cm of parser.west, anchor=west, align=center] (daggen) {DAG\\Generator};
    \node [box, fill=frontEnd, below=1.8cm of cgen.east, anchor=east, align=center] (explorer) {State Space\\Explorer};
    \node [box, fill=backEnd, right=0.5cm of cgen, align=center] (bcgen) {Bytecode\\Generator};
    
    \node [box, fill=frontEnd, below=3.3cm of bcgen.east, anchor=east, minimum width=2.5cm, minimum height=1cm, align=center] (lb) {\textsc{Load}\\\textsc{Balanced}};
    \node [box, dashed, fill=external, below=0.1cm of lb, minimum width=2.5cm, minimum height=1cm, align=center] (egs) {\textsc{Edge}\\\textsc{Generation}};
    % \node [box, dashed, fill=external, below=0.1cm of egs, minimum width=2.5cm, minimum height=1cm, align=center] (mbox) {\textsc{Mocasin}};
    % \node [innerbox, minimum width=2.7cm, minimum height=3.45cm, above=0.1cm of
    % lb, anchor=north, align=center, label=below:{Static Schedulers}]
    % (schedulers) {};
    \node [innerbox, minimum width=2.7cm, minimum height=2.35cm, above=0.1cm of
    lb, anchor=north, align=center, label={[align=center]below:{Quasi-Static\\Schedulers}}] (schedulers) {};
    
    \node [decision, fill=frontEnd, left=3.5cm of egs.west, align=center, text depth=1ex] (decision) {\textbf{Select}\\\textbf{Scheduler}};

    % \node [box, fill=frontEnd, left=1.5cm of mbox, align=center] (sdfggen) {SDFG\\Generator};

    \node (lf) [doc, above=2.25cm of parser.west, anchor=west, align=center] {\textsc{LF} w/\\WCETs};
    \node (inst) [doc, above=of cgen, align=center] {Inst.\\Code};
    \node (lib) [doc, left=0.5cm of inst, align=center, fill=backEnd] {\pretvm\\Runtime};
    \node (bc) [doc, fill=backEnd, above=of bcgen, align=center] {\pretvm\\Bytecode};
    \node [innerbox, dashed, minimum width=6.6cm, minimum height=1.8cm, left=0.1cm of lib, anchor=west, align=center, label=above:{Compiled by \textsc{GCC} into executable}] (output) {};

    % Lines
    \draw [arrow] ([xshift=-0.33cm]lf.south) -- (parser.north);
    \draw [arrow] (parser) -- (ast);
    \draw [arrow] (ast) -- (cgen);
    \draw [arrow] ([xshift=0.25cm]daggen.south) -- node[midway, above, rotate=90] {DAGs} (decision.north);

    % Define a coordinate for the split point
    \coordinate[right=1.17cm of decision.east] (split);

    % Draw the main line from decision to the split point
    \draw[line width=0.8pt] (decision.east) -- (split);

    % Draw the individual arrows from the split point to their destinations
    % \draw [arrow] (split) -| node[near end, right] {\textbf{M}} (sdfggen.north);
    \draw [arrow] (split) -- node[above] {\textbf{EG}} (egs);
    \draw [arrow] (split) |- node[near end, above] {\textbf{LB}} (lb);

    \draw [arrow] (ast.south) |- ([yshift=-0.3cm]ast.south) -| (explorer.north);
    \draw [arrow] (explorer) -- (daggen) node[midway, below, align=center, yshift=0.5cm] {SSDs w/\\Guarded\\Transitions};
    % \draw [arrow] (sdfggen.east) -- node[above] {SDFGs} (mbox.west);
    \draw [arrow] ([xshift=0.43cm]schedulers.north) -- node[align=center, rotate=90] {Static\\Schedules} (bcgen);
    \draw [arrow] (cgen) -- (inst);
    \draw [arrow] (bcgen) -- (bc);

  \end{tikzpicture}

\caption{Methodology overview. Colored boxes are our work. Brown boxes denote
external tools we build interface for. Blue and brown boxes are
described in Sec.~\ref{sec:lf_to_static_schedules}. Green boxes are
described in Sec.~\ref{sec:static_schedules_to_pretvm}.}
\label{fig:pretvm_methodology}
\end{figure}
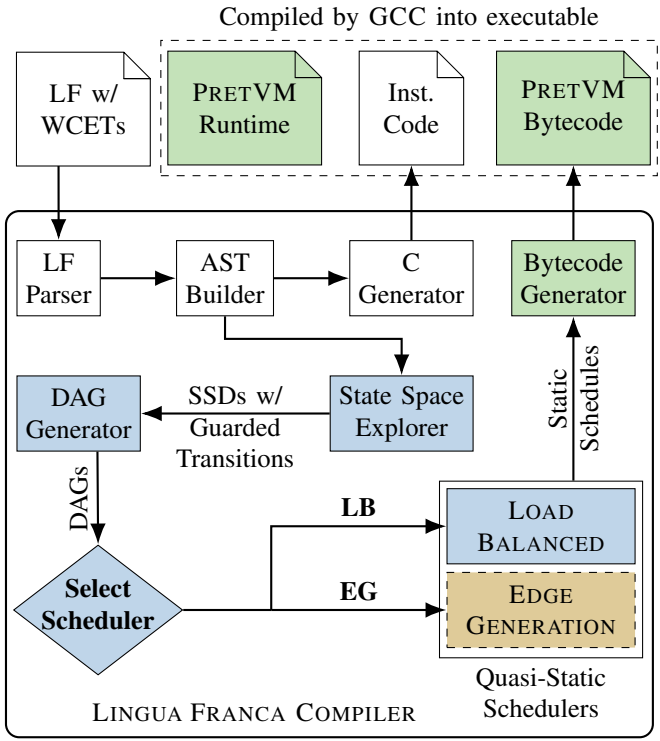

\section{From LF to Quasi-Static Schedules}
\label{sec:lf_to_static_schedules}

Let us now dive into our methodology, as shown in Figure~\ref{fig:pretvm_methodology}.
At a high level, our methodology consists of two parts: first,
compiling an \lf program into quasi-static schedules
(Sec.~\ref{sec:lf_to_static_schedules}), and second, generating and executing 
bytecode on \pretvm implementations (Sec.~\ref{sec:static_schedules_to_pretvm}).
In this section, we focus on the first part, which covers the blue and
brown boxes in Figure~\ref{fig:pretvm_methodology}.

\subsection{State Space Diagrams with Guarded Transitions}
The input of our approach is an \lf program with
WCET annotations. Each reaction is labeled by the user with a WCET annotation,
indicating the maximum amount of time this reaction is expected to require to complete execution
on a certain execution platform along with the associated overhead. We explain
further in Sec.~\ref{sec:predictable_timing} how a user could derive the WCETs. Given an annotated \lf program, the \lf
compiler parses the program and builds an abstract syntax tree (AST).
Once an AST is built, the compiler invokes a built-in C Generator for generating
program-specific instrumentation code, including custom C structs for
reactor definitions in the specific program, user-specified reaction bodies
written in C, memory allocation and deallocation functions, etc.

From the AST, the \term{State Space Explorer} generates \term{State Space
 Diagrams} (SSDs)~\cite{lin2023lfverifier}, which capture the logical behavior
 of an \lf program.
 Lin et al.~\cite{lin2023lfverifier} identified a subset of \lf programs whose
 logical behavior can be represented using a single SSD.
 In this work,
 we expand this subset of \lf programs by combining multiple SSDs with guarded transition
 relations between each pair of SSDs.

An SSD is a directed graph $(V, E)$, where each node consists of a timestamp,
a set of reactions invoked at this timestamp, and a set of future events known
at this timestamp.
Formally, $V = (t, r, e) \subseteq
\textit{tags} \times \mathcal{P}(\textit{rxns}) \times
\mathcal{P}(\textit{events})$ and $E \subseteq V \times V$, where $\mathcal{P}$
denotes a power set, $\textit{tags}$ denotes the set of timestamps,
$\textit{rxns}$ denotes the set of reactions in an \lf program, and
$\textit{events}$ denotes the set of events.
To construct an SSD, the State Space Explorer performs a light-weight simulation
by unrolling a worst-case execution until either 
\begin{enumerate}
  \item there are no more pending events, \ie $e = \varnothing$,
  \item or the simulation completes a user-specified time horizon $h$, \ie $t = h$,
  \item or a hyperperiod (\ie a cycle) is found.
\end{enumerate}
The simulation approach is similar
to~\cite{ghamarian2006throughput}.

\begin{figure}
  \centering
  \includegraphics[width=0.7\columnwidth]{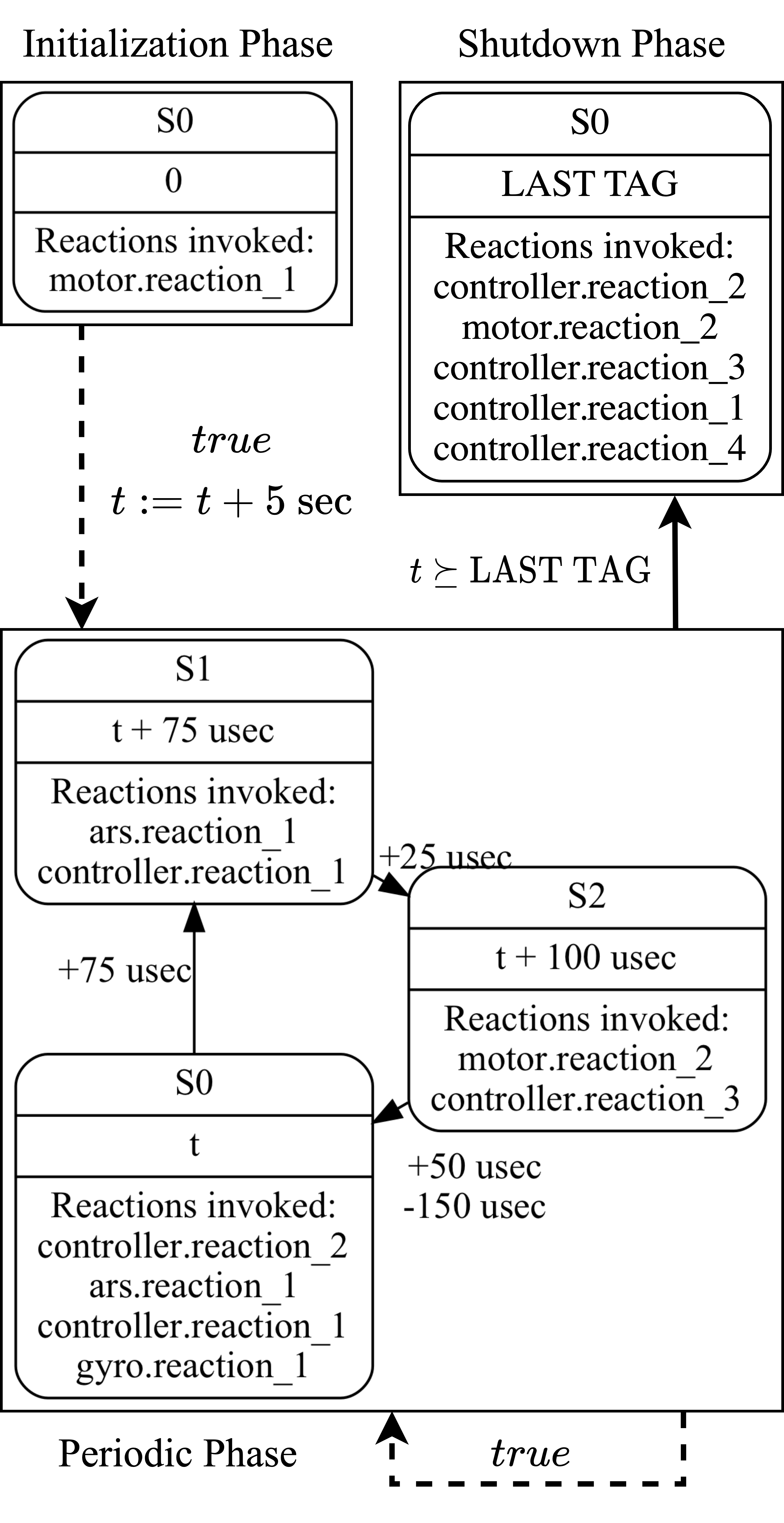}
  \caption{Three separate SSDs connected by guarded transitions from the running example. Pending events are not rendered here to save space.}
  \label{fig:state_space_diagrams}
\end{figure}

In this work, we combine multiple SSDs to represent more dynamic behaviors of \lf programs.
Figure~\ref{fig:state_space_diagrams} shows the SSDs for the running example in
Sec.~\ref{sec:running_example}. The logical behavior of the reaction
wheel system can be divided into three phases: an \term{initialization phase}, a
\term{periodic phase}, and a \term{shutdown phase}. 
Each phase can be represented by an individual SSD, respectively.
The initialization phase
consists of one state space node with timestamp $0$, the very first timestamp of an
execution, and a single reaction invocation, reaction 1 from the
\reactor{Motor} reactor, due to the startup trigger (rendered as a white
circle in Figure~\ref{fig:example}). Similarly, the shutdown phase of the
execution also consists of a single state space node with the
last tag of the execution---a user-specified timeout---and five possible reaction invocations due to
the shutdown trigger and the last firings of the timers.
The periodic phase contains three state space nodes: the first node starts at
\s{t=5}, the second starts \us{75} later, and the third node
starts \us{25} after the second. The reaction invocations occur due to
the timer firings and the production of outputs from the reactions in the
\reactor{AngularRateSensor} and \reactor{Gyroscope}.

The three phases of execution are connected using guarded transitions. Here, we
use the arrow notations from~\cite{LeeSeshia:17:EmbeddedSystems}, where the
solid arrows represent regular guarded transitions, and
dashed arrows represent \textit{default} transitions, which are taken when none
of the guarded transitions are taken. In Figure~\ref{fig:state_space_diagrams},
the initialization phase has a default transition to the periodic phase, meaning
that when the task set in the initialization phase is done, the execution can
move onto the task set in the periodic phase. As part of the transition, the
current time $t$ is incremented by $5\text{sec}$ due to the five-second timer offset
specified in the program (Figure~\ref{fig:example}). The periodic phase has two
outgoing transitions. The first is a guarded transition to the shutdown phase,
with a guard $t \succeq \text{LAST TAG}$, where $t$ denotes the current time.
This transition is taken when the current time $t$ reaches $\text{LAST TAG}$,
which is a user-specified timeout value. While $\text{LAST TAG}$
is not reached, the execution does not take this guarded
transition, but instead takes the default transition of the periodic phase and
executes the periodic task set for another hyperperiod.

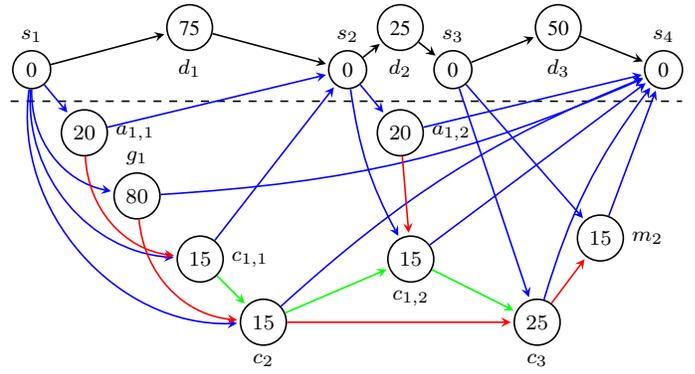
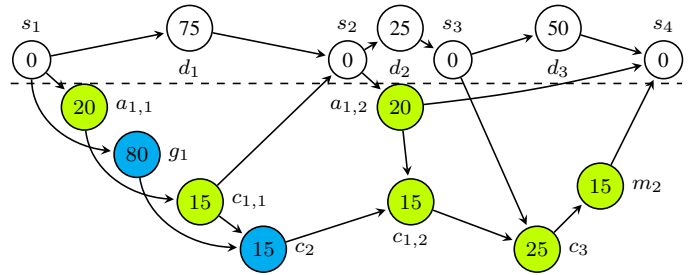
\begin{figure}
  \centering
  \begin{subfigure}{\columnwidth}
    \centering  
    \begin{tikzpicture}
        [
            ->,>=stealth,shorten >=1pt,auto,node distance=2.8cm,semithick,node font=\footnotesize, scale=0.7, xscale=0.8, yscale=0.8
        ]
        \node (s1) [circle, draw, label=above:$s_1$]  at (0, 1) {0};
        \node (d1) [circle, draw, label=below:$d_1$]  at (3.75,2) {75};
        \node (s2) [circle, draw, label=above:$s_2$]  at (7.5, 1) {0};
        \node (d2) [circle, draw, label=below:$d_2$]  at (8.75, 2) {25};
        \node (s3) [circle, draw, label=above:$s_3$]  at (10, 1) {0};
        \node (d3) [circle, draw, label=below:$d_3$]  at (12.5,2) {50};
        \node (s4) [circle, draw, label=above:$s_4$]  at (15, 1) {0};
        
        \draw (s1) -> (d1);
        \draw (d1) -> (s2);
        \draw (s2) -> (d2);
        \draw (d2) -> (s3);
        \draw (s3) -> (d3);
        \draw (d3) -> (s4);
        
        \draw[dashed, -] (-0.5, 0.25) -- (15.5, 0.25);

        \node (a11) [circle, draw, label=right:$a_{1,1}$]  at (1.25,-0.5) {$20$};
        \node (g1) [circle, draw, label=above:$g_{1}$]  at (2.5,-2) {$80$};
        \node (c11) [circle, draw, label=right:$c_{1,1}$]  at (4,-3.5) {$15$};
        \node (c2) [circle, draw, label=below:$c_{2}$]  at (5.5,-5) {$15$};

        \node (a12) [circle, draw, label=right:$a_{1,2}$]  at (8.75,-0.5) {$20$};
        \node (c12) [circle, draw, label=below:$c_{1,2}$]  at (9,-3.5) {$15$};

        \node (m2) [circle, draw, label=right:$m_{2}$]  at (13.5,-3) {$15$};
        \node (c3) [circle, draw, label=below:$c_{3}$]  at (12,-5) {$25$};

        \draw[color=blue] (s1) -> (a11);
        \draw[color=blue] (s1) edge[bend right=40] (g1);
        \draw[color=blue] (s1) edge[bend right=45] (c11);
        \draw[color=blue] (s1) edge[bend right=50] (c2);
        
        \draw[color=blue] (s2) -> (a12);
        \draw[color=blue] (s2) edge[bend right=10] (c12);

        \draw[color=blue] (s3) -> (c3);
        \draw[color=blue] (s3) -> (m2);

        \draw[color=blue] (a11) -> (s2);
        \draw[color=blue] (g1) edge[bend right=10] (s4);
        \draw[color=blue] (a12) -> (s4);
        \draw[color=blue] (m2) -> (s4);
        \draw[color=blue] (c11) -> (s2);
        \draw[color=blue] (c12) -> (s4);
        \draw[color=blue] (c2) edge[bend left=10] (s4);
        \draw[color=blue] (c3) edge[bend left=10] (s4);

        \draw[color=red] (a11) edge[bend right=40] (c11);
        \draw[color=red] (a12) -> (c12);
        \draw[color=red] (g1) edge[bend right=40] (c2);
        \draw[color=red] (c2) -> (c3);
        \draw[color=red] (c3) -> (m2);
        
        \draw[color=green] (c11) -> (c2);
        \draw[color=green] (c12) -> (c3);
        \draw[color=green] (c2) -> (c12);
    
    \end{tikzpicture}
    \vspace{-10pt}
    \caption{DAG representation after conversion from the state space diagram in Figure~\ref{fig:state_space_diagrams}, showing timing edges in blue, trigger edges in red, and reactor sequentialization edges in green, with the virtual path above the dashed line.}
    \label{fig:dag_gen}
  \end{subfigure}    
  \begin{subfigure}{\columnwidth}
  \centering
  \begin{tikzpicture}
        [
            ->,>=stealth,shorten >=1pt,auto,node distance=2.8cm,semithick,node font=\footnotesize, scale=0.7, xscale=0.8,, yscale=0.6
        ]
        \node (s1) [circle, draw, label=above:$s_1$]  at (0, 1) {0};
        \node (d1) [circle, draw, label=below:$d_1$]  at (3.75,2) {75};
        \node (s2) [circle, draw, label=above:$s_2$]  at (7.5, 1) {0};
        \node (d2) [circle, draw, label=below:$d_2$]  at (8.75, 2) {25};
        \node (s3) [circle, draw, label=above:$s_3$]  at (10, 1) {0};
        \node (d3) [circle, draw, label=below:$d_3$]  at (12.5,2) {50};
        \node (s4) [circle, draw, label=above:$s_4$]  at (15, 1) {0};
        
        \draw (s1) -> (d1);
        \draw (d1) -> (s2);
        \draw (s2) -> (d2);
        \draw (d2) -> (s3);
        \draw (s3) -> (d3);
        \draw (d3) -> (s4);
        
        \draw[dashed, -] (-0.5, 0.25) -- (15.5, 0.25);

        \node (a11) [circle, draw, label=right:$a_{1,1}$, fill=lime]  at (1.25,-0.5) {$20$};
        \node (g1) [circle, draw, label=right:$g_{1}$, fill=cyan]  at (2.5,-2) {$80$};
        \node (c11) [circle, draw, label=right:$c_{1,1}$, fill=lime]  at (4,-3.5) {$15$};
        \node (c2) [circle, draw, label=right:$c_{2}$, fill=cyan]  at (5.5,-5) {$15$};

        \node (a12) [circle, draw, label=left:$a_{1,2}$, fill=lime]  at (8.75,-0.5) {$20$};
        \node (c12) [circle, draw, label=below:$c_{1,2}$, fill=lime]  at (9,-3.5) {$15$};

        \node (m2) [circle, draw, label=right:$m_{2}$, fill=lime]  at (13.5,-3) {$15$};
        \node (c3) [circle, draw, label=right:$c_{3}$, fill=lime]  at (12,-5) {$25$};

        \draw (s1) -> (a11);
        \draw (s1) edge[bend right=40] (g1);
        
        \draw (s2) -> (a12);

        \draw (s3) -> (c3);

        \draw (c11) -> (s2);
        % \draw (g1) edge[bend left=5] (s5);
        \draw (a12) edge[bend right=5] (s4);
        \draw (m2) -> (s4);

        \draw (a11) edge[bend right=40] (c11);
        \draw (a12) -> (c12);
        \draw (g1) edge[bend right=40] (c2);
        \draw (c3) -> (m2);
        
        \draw (c11) -> (c2);
        \draw (c12) -> (c3);
        \draw (c2) -> (c12);

        % \draw[color=green] (c11) -> (a12);
    
    \end{tikzpicture}
    \vspace{-10pt}
  \caption{Sample output from the \textsc{Edge Generation} scheduler, which is a transitive reduction of the DAG in Figure~\ref{fig:dag_gen} with width $M=2+1$, with $+1$ for the virtual path. The colors of the nodes show a valid partitioning to 2 available cores.}
  \label{fig:dag_gen_red}
  \end{subfigure}
  \caption{DAG generation and partitioned scheduling. The node names are the
  initial of the reactor name with the reaction number and optionally the
  invocation count as subscript. For example, $a_{1,1}$ is the first invocation
  of reaction 1 in \reactor{AngularRateSensor}. The numbers in the
  nodes indicated the WCET of each reaction and the nodes above the dashed line
  are the virtual path.}
  \label{fig:dags_gen}
\end{figure}

\subsection{Converting SSDs to DAGs}
In Figure~\ref{fig:pretvm_methodology}, after the SSDs with guarded transitions
are generated by the State Space Explorer, the SSDs enter the DAG Generator and
are converted to DAGs. The conversion starts by generating nodes corresponding
to the reaction invocations, and then connects these nodes through edges based on
timing and precedence constraints. 

When iterating over the state space diagram, for each reaction invocation a node
is generated, taking the current logical time as offset, and adopting the
execution time of the reaction. In this work, we assume implicit deadlines for all nodes, meaning their deadline is equal to their period. As such, every node has a deadline according to the logical tag of its next invocation. The period of the DAG task is extracted from the length of the loop in
the state space diagram. After generating all nodes, the virtual nodes are
generated according to the unique offset and deadline values.

The edges added are the following:
\begin{enumerate}
    \item Edges connecting the virtual nodes to form the virtual path;
    \item Timing edges from and to the virtual path for offset and deadline values of the nodes, respectively;
    \item Dependency edges between nodes corresponding to reactions where the second is triggered by the first;
    \item Sequentialization edges between nodes belonging to the same reactor, ordering them by logical time, with the reaction number as tiebreaker.
\end{enumerate}
Figure~\ref{fig:dag_gen} shows the DAG generated from the periodic phase of
Figure~\ref{fig:state_space_diagrams}. 

As an example for all the added edges, we focus on the node $c_2$, which is the second reaction of the controller reactor. It received the attributes $O_{c_2}=0$, $D_{c_2}=150$, and $C_{c_2}=15$ from the \lf program. The blue edges from $s_1$ and to $s_4$ set the bounds for the timing attributes. The red edge from $g_1$ is added because the first reaction of the gyroscope triggers this reaction. The green edge from $c_{1,1}$ is added because the first instance of the first reaction of the controller reactor has the same logical tag but lower number. The outgoing green edge to $c_{1,2}$ is set because it has a higher logical tag. Lastly, the red outgoing edge to $c_3$ is added because $c_3$ is triggered by $c_2$ through the logical action. An example for the need of the virtual path can be seen when inspecting the node $c_3$. Due to the logical action with a minimum delay of 100 the offset of the reaction should be $O_{c_3}=100$, necessitating the constraint $t^\text{EST}_{c_3} \geq 100$. By adding the edge from the sync node $s_3$, $c_3$ cannot start before 100 due to the total time required by $d_1$ and $d_2$.

\subsection{Scheduling by Graph Partitioning}
After the conversion to DAGs, the \lf compiler selects a user-specified quasi-static
scheduler and feed the generated DAGs to it.
At a high level, a quasi-static scheduler takes a set of 
graphs as input, partitions the graphs based on the number of workers
available, and outputs graph partitions, which are considered as
\term{quasi-static schedules}--a spatial mapping between tasks and execution
platforms with timing and inter-task dependencies.
The schedules are quasi-static because they statically encode tasks that
\emph{could} be triggered, but whether they are actually triggered  
is determined during execution using virtual instructions facilitating control flow.
More details on these instructions are explained in
Sec.~\ref{sec:static_schedules_to_pretvm}.

Given an unparititioned DAG, a quasi-static scheduler is free to designate DAG
nodes to any partition it deems fit. The scheduler can modify edges as long
as the output partitioned DAG satisfies all dependencies in the input
unpartitioned DAG.
In addition, the quasi-static scheduler must linearize each parition by adding
edges such that there exists a path in the output DAG passing through all nodes
within the same partition.

In this work, we support two different types of schedulers:
\textsc{Load Balanced} and \textsc{Edge Generation}~\cite{sun2023edge}.
The \textsc{Load Balanced} scheduler is a complete scheduler
implemented in Java, embedded inside the \lf compiler.
It directly takes the generated DAGs as input and
produces graph partitions that aim to assign each worker the same amount of
work measured in execution times.

The \textsc{Edge Generation} scheduler, on the other hand, is an external static
scheduler implemented in Python by Sun et al.~\cite{sun2023edge}. Instead of
distributing workloads fairly across workers, EGS focuses on
satisfying timing constraints. Given a DAG task with a deadline, EGS checks
whether the DAG task is \textit{trivially schedulable} on the given number of processors. 
A DAG task $(\mathcal{G}, P)$ is trivially schedulable on $M$ processors if its length is less than or equal to the period $P$ and its width is less than or equal to the number of processors $M$.
If it is not trivially
schedulable by construction, EGS adds edges trying to make it trivially
schedulable. A trivially schedulable DAG can easily be partitioned using paths
that cover the DAG, with an example shown in Figure~\ref{fig:dag_gen_red}.

\section{From Quasi-Static Schedules to \pretvm}
\label{sec:static_schedules_to_pretvm}
Once quasi-static schedules are generated in the form of graph partitions, they will
then be compiled into bytecode and executed by the \pretvm runtime. This section
walks through the green boxes in Figure~\ref{fig:pretvm_methodology} in
detail.
The main challenges we address here include composing virtual instructions to deliver
correct \lf semantics (\ie deterministic concurrency) and accounting for \lf's generality over LET (\ie
encoding the logical timeline and detecting whether a reaction should be
invoked during execution), while maintaining low overhead. 

\subsection{Virtual Instruction Set}

We begin by defining a virtual instruction set for \pretvm, which is used to
encode the quasi-static schedules. The full instruction set is shown in
Table~\ref{table:instruction_set}. 

\begin{table}[t]
  \centering
  \caption{\pretvm Virtual Instruction Set}
  \label{table:instruction_set}
  \scriptsize
  \bgroup
  \def\arraystretch{1.5}%
  \begin{tabularx}{\columnwidth}{|lX|}
  \hline
  \textbf{Instruction} & \textbf{Description} \\
  \hline
  \opcode{ADD  op1, op2, op3}   & Add to an integer variable (op2) by an integer variable (op3) and store to a destination register (op1).\\
  \hline
  \opcode{ADDI op1, op2, op3}  & Add to an integer variable (op2) by an immediate (op3) and store to a destination register (op1).\\
  \hline
  \opcode{ADV  op1, op2, op3} & Advance the logical time of a reactor (op1) to a
  base time register (op2) + a time increment variable (op3). \\
  \hline
  \opcode{ADVI op1, op2, op3} & Advance the logical time of a reactor (op1) to a base time register (op2) + an immediate value (op3). \\
  \hline
  \opcode{BEQ  op1, op2, op3} & Take the branch (op3) if the op1 register value is equal to the op2 register value. \\
  \hline
  \opcode{BGE  op1, op2, op3} & Take the branch (op3) if the op1 register value
  is greater than or equal to the op2 register value. \\
  \hline
  \opcode{BLT  op1, op2, op3} & Take the branch (op3) if the op1 register value
  is less than the op2 register value. \\
  \hline
  \opcode{BNE  op1, op2, op3} & Take the branch (op3) if the op1 register value
  is not equal to the op2 register value. \\
  \hline
  \opcode{DU   op1, op2} & Delay until the physical clock reaches a base timepoint (op1) plus an offset (op2). \\
  \hline
  \opcode{EXE  op1, op2} & Execute a function (op1) with argument (op2). \\
  \hline
  \opcode{JAL  op1 op2} & Store the return address to op1 and jump to a label (op2). \\
  \hline
  \opcode{JALR op1, op2, op3} & Store the return address to op1 and jump to a
  base address (op2) + an immediate offset (op3). \\
  \hline
  \opcode{STP} & Stop the execution. \\
  \hline
  \opcode{WLT  op1, op2} & Wait until a register value (op1) to be
  less than a desired value (op2). \\
  \hline
  \opcode{WU   op1, op2} & Wait until a register value (op1) to be greater than
  or equal to a desired value (op2). \\
  \bottomrule
  \end{tabularx}
  \egroup
\end{table}

The \pretvm is a register-based virtual machine.
The virtual instruction set borrows inspiration from the RISC-V instruction
set~\cite{waterman2014risc} and the timing instructions from the PRET
Machines~\cite{lickly2008cases,zimmer2014flexpret,jellum2023interpret}. 
The virtual instruction set contains standard RISC-V-like instructions such as
\opcode{ADD}, \opcode{ADDI}, \opcode{BEQ}, \opcode{BGE}, \opcode{BLT},
\opcode{BNE}, \opcode{JAL}, and \opcode{JALR}. These instructions are useful for
encoding control flow and manipulating auxiliary registers (further explained in
Sec.~\ref{subsec:generate_bytecode}) that keeps track of the progress of a
quasi-static schedule.

Besides the standard instructions, the virtual instruction set also contains
timing instructions, including \opcode{DU}, \opcode{WU}, and \opcode{WLT}. The
use of timing instructions adopts the PRET philosophy of making timing a
\emph{semantic} property of the instruction set. The \opcode{DU} instruction
helps align an execution with real time, and the \opcode{WU} and \opcode{WLT}
instructions are useful for implementing synchronization among workers.

Lastly, we introduce reactor-inspired instructions: \opcode{ADV},
\opcode{ADVI}, \opcode{EXE}, and \opcode{STP}. The \opcode{ADV}
and \opcode{ADVI} instructions advance the timestamp of a reactor's state.
\opcode{EXE} executes a C function pointer, which could be a user-written
reaction body or any functions required by the runtime to deliver the \lf semantics.
\opcode{STP} terminates the \lf program execution.

\subsection{Generating Bytecode from Quasi-Static Schedules}
\label{subsec:generate_bytecode}

With a virtual instruction set defined, we are almost ready to encode the quasi-static
schedules, \ie partitioned graphs, into bytecode, which consists of sequences of virtual
instructions, one for each worker thread.
Before we dive into the main instruction generation algorithm, let us walk
through a few prerequisites for encoding the quasi-static schedules using the virtual
instruction set.

\paragraph{Auxiliary Registers}
Encoding the semantics of quasi-static schedules requires storing certain information
at runtime in registers. Certain types
of registers are worker-specific, meaning that they duplicated for each
worker, while others are global, meaning that they are unique and read
by all workers.  
The progress of a worker during the execution of
its schedule is tracked using a worker-specific \register{counter} register.
A global \register{start\_time} register keeps track of the timestamp at which the
execution starts, while a global \register{timeout} register stores the last timestamp
(\ie the timestamp in the shutdown phase in
Figure~\ref{fig:state_space_diagrams}) of the execution, as specified by the user.
We use a global \register{time\_offset} register to keep track of the timestamp at
the beginning of the current hyperperiod.
To increment \register{time\_offset} by a variable amount, as the
execution goes through different phases, a global \register{offset\_inc} register
is used to store these variable increments at different points during the
execution. To store the return address of a worker before a jump, we use a
worker-specific \register{return\_addr} register. Moreover, for the purpose of
global synchronization at the boundary of a hyperperiod (explained later), each
a worker uses a \code{binary\_sema} register as a binary semaphore.
Lastly, there is a global constant \code{zero} register, which always holds a
value of 0.

\paragraph{Synchronization}
In our execution strategy, we support two types of synchronization: (i)
\term{local synchronization} between a pair of workers, and (ii) \term{global synchronization} across all workers.
Local synchronization is achieved using the \opcode{WU} and the \register{counter}
registers. Without loss of generality, 
assume that worker $B$ needs to wait 
until worker $A$ completes some upstream task. To implement this, in worker
$A$'s bytecode, we use an \opcode{ADDI} to increment worker $A$'s
\register{counter} after it completes the upstream task, and in worker
$B$'s bytecode, we use a \opcode{WU} to let it block until worker $A$'s
\register{counter} reaches the incremented value.
On the other hand, global synchronization across all workers is achieved by
inserting a \term{synchronization code block} into each worker's bytecode.
One worker is designated as the \term{coordinator}, denoted as $c$, facilitating the global
synchronization, while the other workers are \term{participants}. Each
participant is denoted as $p$. The coordinator executes the following code block:
\begin{enumerate}
  \item \label{emu:wu1} \opcode{WU} $\texttt{binary\_sema}_p, 1$
  \item \label{emu:wu2} (... Repeat \opcode{WU} for each participant $p$ ...)
  \item \label{emu:update_offset} \opcode{ADD} $\texttt{time\_offset}, \texttt{time\_offset}, \texttt{offset\_inc}$
  \item \label{emu:reset_counter1} \opcode{ADDI} $\texttt{counter}_c, \texttt{zero}, 0$
  \item \label{emu:reset_counter2} \opcode{ADDI} $\texttt{counter}_p, \texttt{zero}, 0$
  \item \label{emu:reset_counter3} (... Repeat \opcode{ADDI} for each participant $p$ ...)
  \item \label{emu:adv1} \opcode{ADVI} $\textit{reactor}, \texttt{time\_offset}, 0$
  \item \label{emu:adv2} (... Repeat \opcode{ADVI} for each \textit{reactor} in the \lf program ...)
  \item \label{emu:addi1} \opcode{ADDI} $\texttt{binary\_sema}_p, \texttt{zero}, 0$
  \item \label{emu:addi2} (... Repeat \opcode{ADDI} for each participant $p$ ...)
  \item \label{emu:jalr} \opcode{JALR} $\texttt{zero}, \texttt{return\_addr}_c, 0$
\end{enumerate}
Conversely, each participant executes this shorter code block:
\begin{enumerate}
  \item \label{emu:addiP} \opcode{ADDI} $\texttt{binary\_sema}_p, \texttt{zero}, 1$
  \item \label{emu:wltP} \opcode{WLT} $\texttt{binary\_sema}_p, 1$
  \item \label{emu:jalrP} \opcode{JALR} $\texttt{zero}, \texttt{return\_addr}_p, 0$
\end{enumerate}
During global synchronization, the coordinator first waits until all
participants have entered their own sync. blocks
(line~\ref{emu:wu1}-\ref{emu:wu2}). A participant enters the sync block and
notifies the coordinator by setting its \register{binary\_sema} to 1
(line~\ref{emu:addiP} in the participant code below).
Once they have entered and get blocked by the \opcode{WLT} 
(line~\ref{emu:wltP} below), the coordinator then
updates the global hyperperiod offset (line~\ref{emu:update_offset} above),
resets all workers' \register{counter} registers
(line~\ref{emu:reset_counter1}-\ref{emu:reset_counter3}), and advances the
timestamp of all reactors' states (line~\ref{emu:adv1}-\ref{emu:adv2}).
Once these ``bookkeeping'' procedures are done, the coordinator resets the
\register{binary\_sema} registers to 0 (line~\ref{emu:addi1}-\ref{emu:addi2}
above), at which point the participants are unblocked by \opcode{WLT} and jump
back to the return addresses specified in their \register{return\_addr}
registers (line~\ref{emu:jalrP} below). 

\paragraph{Connections}
\lf allows the user to specify connections with optional fixed logical delays.
When the delays are greater than the rate at which output ports are written, a
storage layer is required to store the in-flight events. In this work, we
implement a connection buffer for each delayed connection using a fixed-size
circular buffer. Since the \lf semantics guarantees that events sent along a
connection have monotonically increasing timestamps, there is no need to reorder
events and hence circular buffers suffice.

For each connection, we code-generate two \term{connection helper functions}:
a pre-connection helper function pushes an event into the circular buffer, and
a post-connection helper function pops an event from the circular buffer. The
quasi-static schedule executes connection helper functions using the
\opcode{EXE} instruction.
% Where we generate them.
Since the \lf semantics allows an output port to be written to by multiple
reactions triggered at the same timestamp, with the last reaction taking
precedence, a pre-connection helper is inserted after invoking the last
reaction that could modify this port to push the latest written value into the
connection buffer.
If a reaction is triggered by a port, a post-connection helper is inserted to
pop the stale value read by the reaction from the connection buffer.

% Optimization
If a connection has no delay, we optimize the connection helper functions away
and let the downstream receiving reaction read value directly from the pointer to the
sending reactor's output port.

\paragraph{Trigger Detection}
The \lf semantics requires a reaction to be invoked when \textit{any} of its
triggers becomes present. Since it is unknown at compile time whether an
input-triggered reaction does get triggered during execution, the quasi-static
schedule must encode logic for checking the presence of triggers at runtime.
We implement the trigger detection mechanism using a combination of
\opcode{BEQ}, \opcode{JAL}, and \opcode{EXE}. 
For every trigger a reaction has that is an input port, a \opcode{BEQ} is
generated. If the input port is connected to a delayed connection, the
\opcode{BEQ} checks if the head event of the connection buffer has the same
timestamp as the current timestamp of the reaction's parent reactor. Otherwise
if the input port is connected to a zero-delay connection, the \opcode{BEQ}
checks if the sending output port's \register{is\_present} field is \code{true}.
In both cases, when the condition evaluates to true, the \opcode{BEQ} branches
to the location of an \opcode{EXE} instruction invoking the reaction body.
After a list of \opcode{BEQ} is generated, a \opcode{JAL} instruction is
generated, which branches to the instruction \textit{after} the
reaction-invoking \opcode{EXE}, when all triggers are absent, in which case the
reaction-invoking \opcode{EXE} is bypassed.
A concrete example of this mechanism is shown in
Figure~\ref{fig:bytecode_example} (line~\ref{line:test_trigger}-\ref{line:execute_reaction}).

\paragraph{Instruction Generation}
Having discussed the above prerequisites for generating bytecode that encodes
quasi-static schedules, we are now ready to dive into 
Algorithm~\ref{algo:generate_instructions}, which shows the procedure for
generating virtual instructions from a partitioned graph.

\begin{algorithm}
  \caption{Generate Instructions from a Partitioned Graph}
  \small
  \label{algo:generate_instructions}
  \begin{algorithmic}[1]
  \Procedure{generateInstructions}{$\textit{graphParitioned}$}
      \State Let $B_i$ be the set of instructions for worker $i$
      \State Let $W$ be the total number of workers
      \State Topologically sort the DAG nodes by dependencies
      \ForAll{$\text{currentNode} \in \text{topologicalSort}$}
          \State $w \gets \text{currentNode.assignedWorker}$
          \If{$\text{currentNode.type} = \textit{reaction}$}
              \ForAll{upstream reactions of currentNode}
      \State $B_w \gets B_w \cup \{\texttt{WU}\}$ \Comment{Wait for upstream}
              \EndFor
              \If{currentNode depends on \textit{sync} node}
                \State \textsc{GeneratePreConnectionHelpers()}  
                \State $B_w \gets B_w \cup \{\texttt{ADVI}, \texttt{DU}\}$
              \EndIf
              \ForAll{input port triggers of currentNode}
      \State $B_w \gets B_w \cup \{\texttt{BEQ}\}$ \Comment{Test trigger presence}
              \EndFor
              \State $B_w \gets B_w \cup \{\texttt{JAL}\}$ \Comment{Absent triggers, skip
      \texttt{EXE}}
              \State $B_w \gets B_w \cup \{\texttt{EXE}\}$ \Comment{Execute reaction}
              \State \textsc{GeneratePostConnectionHelpers()} 
              \State $B_w \gets B_w \cup \{\texttt{ADDI}\}$ \Comment{Update counter}
              \ElsIf{$\text{currentNode.type} = \text{\textit{sync}}$ and
              currentNode is tail node of DAG}
              \State \textsc{GeneratePreConnectionHelpers()}
              \If{currentNode.time is not $\infty$}
                  \For{$i = 0$ to number of workers - 1}
                      \State $B_i \gets B_i \cup \{\texttt{DU}\}$ \Comment{End of hyperperiod}
                      \If{$i = 0$}
      \State $B_i \gets B_i \cup \{\texttt{ADDI}\}$ \Comment{Iterate again}
                      \EndIf
                      \State $B_i \gets B_i \cup \{\texttt{JAL}\}$ \Comment{Synchronize}
                  \EndFor
              \EndIf
          \EndIf
      \EndFor
      \State \Return $\bigcup_{i=0}^{W - 1} B_i$
  \EndProcedure
  \end{algorithmic}
\end{algorithm}

The algorithm first topologically sorts the nodes of the
partitioned graph and traverses the graph in the sorted order, ensuring that
before visiting a node, all of its dependencies have been already visited.
If the current node is a reaction node, then a \opcode{WU} is
generated for each of its upstream tasks. Moreover, if the reaction node depends on
a \textit{sync} node, pre-connection helper functions are first generated for
earlier reactions that write to output ports; then, \opcode{ADVI} and \opcode{DU}
instructions are generated to advance the timestamps of the parent reactor and
delay the execution until the physical release time of the reaction node. 
Then, instructions for trigger detection are generated, which start with a
sequence of \opcode{BEQ} instructions and a \opcode{JAL} instruction.
This is then followed by an \opcode{EXE} instruction
for executing the reaction body. 
Once the reaction is executed, a sequence of \opcode{EXE} instructions execute
the post-connection helper functions.
An \opcode{ADDI}
instruction is then used to increment a worker counter, effectively updating the
current progress of the worker.
If the current node is an \textit{sync} node and is the tail node of the partitioned
graph, then pre-connection helper functions are generated for all the remaining
reactions invoked in the current hyperperiod, which can write to output ports.
A \opcode{DU} is also added to each worker's bytecode. Each worker
further uses an \opcode{ADDI} to increment the hyperperiod register. Once that
is done, each worker executes a \opcode{JAL} to the synchronization block.

\paragraph{Linking}
Once a sequence of instructions is generated for each partitioned graph for each
worker, the Bytecode Generator links multiple sequences together into a single
piece of output bytecode. The linking process proceeds by phases. It starts with the
first phase the execution enters into, such as the initialization
phase, and places the generated instructions from this phase into the output
bytecode. Then, the process follows the guarded transitions and places the
generated instructions from downstream phases one-by-one, until there are no
more unlinked phases with incoming transitions. At the end of the instructions
generated by each phase, the Bytecode Generator places instructions generated
from guarded transitions using \opcode{JAL} and branch instructions. In the
running example (Figure~\ref{fig:state_space_diagrams}), the default transition
between the initialization phase and the periodic phase is encoded as a
\opcode{JAL} instructions between two code blocks corresponding to the two
phases, similarly for the default self transition in the periodic phase. The
guarded transition with the guard $t \succeq \text{LAST TAG}$ is encoded as 
$\texttt{BGE} \; t, \text{LAST TAG}, \text{SHUTDOWN}$.

Finally, as a concrete example, the generated bytecode for the worker assigned
to the blue partition in Figure~\ref{fig:dag_gen_red} is shown in
Figure~\ref{fig:bytecode_example}.
\begin{figure}[h]
  \begin{enumerate}
    \item \texttt{EXE} $g_1$\label{line:g1_node_starts} \qquad\qquad(This line is labeled $\text{PERIODIC}_\text{blue}$.)
    \item \texttt{ADDI} $\texttt{counter}_\text{blue}$, $\texttt{counter}_\text{blue}$, $1$\label{line:g1_node_ends}
    \item \texttt{WU} $\texttt{counter}_\text{green}$, $2$
    \item \texttt{BEQ} $\texttt{in1\_is\_present}$, $\textit{true}$, \text{line\_\ref{line:execute_reaction}}\label{line:test_trigger}
    \item \texttt{JAL} \text{line\_\ref{line:pass_exe}}
    \item \texttt{EXE} $c_2$\label{line:execute_reaction}
    \item \texttt{EXE} $\texttt{out0\_pre\_connection\_helper}$\label{line:pre_conn_helper}
    \item \texttt{EXE} $\texttt{in1\_post\_connection\_helper}$\label{line:post_conn_helper}
    \item \texttt{ADDI} $\texttt{counter}_\text{blue}, \texttt{counter}_\text{blue}, 1$\label{line:pass_exe}
    \item \texttt{DU} $\texttt{time\_offset}, 150 \mu s$
    \item \texttt{ADDI} $\texttt{offset\_inc}, 150 \mu s$
    \item \texttt{JAL} $\texttt{return\_addr}_\text{blue}, \text{SYNC\_BLOCK}$
    \item \texttt{BGE} $\texttt{time\_offset}, \texttt{timeout}, \text{SHUTDOWN}_\text{blue}$
    \item \texttt{JAL} $\texttt{return\_addr}_\text{blue}, \text{PERIODIC}_\text{blue}$
  \end{enumerate}
  \caption{Generated bytecode for the blue worker in Figure~\ref{fig:dag_gen_red}}
  \label{fig:bytecode_example}
\end{figure}

\subsection{Executing Bytecode on \pretvm}
\label{subsec:execute_bytecode}
At this point in the workflow (Figure~\ref{fig:pretvm_methodology}), we have
finished generating the \pretvm bytecode.
To execute the bytecode, we implement a \pretvm Runtime, which implements the
virtual instruction set and interprets the bytecode during execution.
To make the \pretvm Runtime compatible with the \lf instrumentation code, we
embed the runtime under the scheduler API of the \lf runtime library. 
The \pretvm runtime, the instrumentation code, and the \pretvm bytecode is
further compiled by a C compiler, such as GCC, into an executable.

\section{Predictable Timing}
\label{sec:predictable_timing}
Predictability is a central concern in real-time system design. In the
literature, however, predictability is often subject to different
interpretations. We argue that whether a system is predictable depends on the
\textit{prediction} one aims to make.
Prior works have focused on finding the worst-case execution times (WCETs) of
individual tasks on various platforms.
However, finding a WCET at the system level, where tasks coordinate and execute concurrently on
multicore platforms, remains an open problem.

An \lf runtime based on \pretvm cleanly separates application logic
(\ie reaction bodies) from coordination logic (\ie quasi-static schedules), presenting a practical way of writing
real-time software that is amenable to timing analysis. 
On the one hand, \lf reactions are run-to-completion at the time of writing, removing the need to account for
preemption and thus simplifying timing analysis.
On the other hand, \pretvm instructions' straightforward semantics simplifies the WCET
analysis of each instruction.

Here we make a key assumption that the user can derive WCET values for each reaction body and
every \pretvm instruction on a given platform, \eg using timing analysis tools
or direct measurements. 
We treat these WCET values as \textit{estimates}---they do not need to be ground
truth, but violations of them are considered fault conditions handled at runtime. 
Assuming that the individual WCET estimates are not violated during execution,
we can now predict
the WCET of an \lf program's hyperperiod, represented as a DAG and implemented in \pretvm bytecode.

We denote a time domain as $\mathbb{T}$. In practice, this domain is defined as
$\mathbb{T} = \mathbb{Z}$, where the unit is nanosecond. Based on the previous
assumption, the user can obtain WCETs for individual reaction bodies and \pretvm
instructions, i.e., there exists a function $w_r : \mathcal{R} \rightarrow
\mathbb{T}$ and a function $w_i : \mathcal{I} \rightarrow
\mathbb{T}$, where $\mathcal{R}$ denotes the set of reactions and the
$\mathcal{I}$ denotes the set of \pretvm instructions.
For the WCETs of \opcode{DU}, \opcode{WU}, and \opcode{WLT}, their wait time
does not count toward the WCETs, only the overhead of calling these instructions
does. Similarly for \opcode{EXE}, the WCET of the function its first operand points to
does not count toward the WCET of the \opcode{EXE} instruction itself.

\begin{definition}[WCET of a node]
  The WCET of a node $n$ is a function $w : \mathcal{V} \rightarrow \mathbb{T}$ defined as
  \[
  w(n) = 
  \begin{cases} 
  0 & \text{if } $n$ \text{ is \textit{sync}}, \\
  d & \text{if $n$ is \textit{dummy} with interval $d$}, \\
  \mathlarger{\sum}_{i \in I(n)} w_i(i) + w_r(r) & \text{if $n$ is for reaction $r$}.
  \end{cases}
  \]
  where $I : \mathcal{V} \rightarrow \mathcal{P}(\mathcal{I})$ returns the set
  of \pretvm instructions generated by a given node $n$.
\end{definition}
Referring back to the running example, node $g_1$ in Figure~\ref{fig:dag_gen_red}
generates line~\ref{line:g1_node_starts}-\ref{line:g1_node_ends} in
Figure~\ref{fig:bytecode_example}, so when the user provides a WCET annotation
for the reaction in \reactor{Gyroscope}, the annotated WCET estimate should account for
\opcode{EXE}, \opcode{ADDI}, and the reaction body in \reactor{Gyroscope}.

\begin{lemma}[WCET up to node $n$]\label{lem:wcet_up_to_n}
  The WCET up to a node $n$, denoted recursively as $\bar{w}(n)$, is defined as
  \[ \bar{w}(n) = 
  \begin{cases} 
    w(n) & \text{if } U(n) = \varnothing, \\
    \max_{u \in U(n)}\bar{w}(u) + w(n) & \text{otherwise}.
  \end{cases}
  \]
  where $U : \mathcal{V} \rightarrow \mathcal{P}(\mathcal{V})$ maps a DAG node $n$
  to a set of its immediately upstream DAG nodes, and if $n$ is the tail node, $U(n)$
  excludes virtual nodes from the returned set.
\end{lemma}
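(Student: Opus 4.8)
The plan is to read Lemma~\ref{lem:wcet_up_to_n} as the assertion that $\bar{w}(n)$ equals the worst-case physical time, measured from the start of a hyperperiod, at which node $n$ finishes executing, and to prove this by induction on a topological ordering of the partitioned DAG. I would first fix such an ordering (which exists because the graph is acyclic) and note that, since the scheduler has linearized each partition, the immediate-predecessor set $U(n)$ simultaneously captures both the cross-worker precedence edges---enforced at runtime by the \opcode{WU}/\register{counter} protocol---and the within-worker sequential order. This lets me treat a single max-plus recurrence uniformly rather than reasoning about workers separately.

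For the base case, a source node $n$ has $U(n)=\varnothing$; no \opcode{WU} blocks it, so in the worst case it begins at the hyperperiod origin and, running to completion non-preemptively, finishes after exactly $w(n)$ units, matching the first branch. For the inductive step, take a non-source node $n$. By the execution semantics every node in $U(n)$ must complete before $n$ may start: within a worker this holds because execution is sequential, and across workers it holds because the \opcode{ADDI}/\opcode{WU} counter mechanism blocks $n$ until each upstream counter has been incremented. Hence the earliest start of $n$ is the latest completion among its predecessors, and since each predecessor $u$ finishes no later than $\bar{w}(u)$ by the induction hypothesis, $n$ finishes no later than $\max_{u\in U(n)}\bar{w}(u)+w(n)$. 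Because the instructions $I(n)$ and the reaction body run to completion once dispatched, with the \opcode{DU}/\opcode{WU}/\opcode{WLT} wait times explicitly excluded from $w$, no additional time is chargeable, giving the upper bound of the second branch.

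To show the bound is tight rather than merely an upper estimate, I would exhibit a worst-case trace in which every node on a longest (critical) path into $n$ attains its annotated WCET. Such a trace exists precisely because reactions are run-to-completion and WCET attainment at one node places no constraint on another, so the completion times add along the critical path and $n$ indeed finishes at $\max_{u\in U(n)}\bar{w}(u)+w(n)$. Separately, I would justify the tail-node clause: a tail \textit{sync} node has incoming edges from the virtual path as well as from real reaction nodes, but the dummy and \textit{sync} nodes encode the period budget $P$ enforced by \opcode{DU}, not computation; excluding them is what distinguishes the cumulative \emph{execution} time from the idealized period, which is exactly the quantity one later compares against $P$.

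The main obstacle is the tightness argument rather than the upper bound: one must verify that the per-node worst cases genuinely compose, \ie that no implicit coupling (through shared registers, connection buffers, or the synchronization protocol) prevents all nodes on the critical path from simultaneously realizing their WCETs. I expect this to reduce to the observations that each node increments only its own worker's \register{counter}, that the connection helpers are themselves counted inside the relevant $w(n)$ via $I(n)$, and that the wait instructions contribute only their already-counted call overhead; once these are established, the composition is immediate and the recurrence is exact.
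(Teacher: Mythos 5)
Your proof is correct and follows the same basic strategy as the paper's --- induction over the DAG --- but it is considerably more substantive. The paper's own proof is a thin structural induction on the number of nodes that examines only the tail node: it observes that a single-\textit{sync}-node DAG has $\bar{w}(n)=w(n)=0$, then in the inductive step simply appeals to the hypothesis that $\bar{w}(u)$ is ``the WCET up to $u$'' for each predecessor and restates the max-plus recurrence for the tail. It never connects the recurrence to the execution semantics, never argues tightness, and never justifies the exclusion of virtual nodes at the tail. You supply all three: you ground the claim that predecessors must complete first in the \opcode{WU}/\register{counter} protocol and in the linearization edges within each partition, you argue that the per-node worst cases compose along a critical path, and you explain why the virtual path is excluded at the tail. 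One interpretive wrinkle: you frame $\bar{w}(n)$ as the worst-case \emph{physical} finish time measured from the hyperperiod origin, but since \opcode{DU} wait times are excluded from $w$, a worker idling at a release point can push physical completion past $\bar{w}(n)$; the quantity the lemma actually computes is the cumulative execution demand along the critical path into $n$ (which is what the subsequent theorem compares against the period), and your argument goes through unchanged under that reading.
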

\begin{proof}
  We prove this theorem by induction. In the base case, the DAG has a single \textit{sync}
  node $n$, which is also the tail node, and so its WCET, which is by
  construction 0, equals $ w(n) = \bar{w}(n) $ since the tail node has no
  upstream. The base case therefore holds. In the inductive step, the DAG has
  more than one nodes and the tail node necessarily has upstream nodes, \ie
  $U(n) \neq \varnothing$.
  By the induction hypothesis, for each upstream node $u \in U(n)$, $\bar{w}(u)$
  is the WCET up to node $u$. Since all upstream nodes point to the tail, a
  maximum of $\bar{w}(u)$ needs to be taken over all $u \in U(n)$. Given that
  the tail node $n$ has a WCET of 0, the WCET up to the tail node
  equals $\max_{u \in U(n)}\bar{w}(u) + 0 = \max_{u \in U(n)}\bar{w}(u) + w(n) =
  \bar{w}(n).$
  Therefore, the induction step holds.
\end{proof}

\begin{theorem}[WCET of a DAG task set]
  The WCET of a DAG with the tail node $n_t$ is $\bar{w}(n_t)$.
\end{theorem}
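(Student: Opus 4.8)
The plan is to derive the theorem directly as a specialization of Lemma~\ref{lem:wcet_up_to_n}. That lemma already establishes that for an arbitrary node $n$, the quantity $\bar{w}(n)$ equals the worst-case time elapsed from the start of the hyperperiod until $n$ finishes executing. The theorem then reduces to the single observation that the WCET of the \emph{entire} DAG task set coincides with the WCET up to its tail node $n_t$, so the result is essentially a corollary of the lemma once that identification is justified.

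First I would make precise what ``WCET of a DAG task set'' denotes: it is the makespan of the quasi-static schedule, i.e., the worst-case physical time at which the last reaction of the hyperperiod completes. I would then invoke the structural fact, guaranteed by the DAG construction and the linearization performed by the scheduler in Section~\ref{sec:lf_to_static_schedules}, that the tail node $n_t$ is the unique sink reachable from every reaction node: after partitioning, each worker's nodes are totally ordered by sequentialization edges, and all these chains terminate at the final \emph{sync} node $n_t$. Consequently $n_t$ finishes if and only if every reaction node has finished, so the makespan of the schedule equals the finishing time of $n_t$.

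Combining these two observations yields the result: the WCET of the DAG task set is the finishing time of $n_t$, which by Lemma~\ref{lem:wcet_up_to_n} is exactly $\bar{w}(n_t)$. I would also point out why the lemma's exclusion of virtual nodes from $U(n_t)$ is essential here, since the dummy nodes of the virtual path encode the period and deadline timing rather than real computation; measuring the makespan of actual work therefore requires taking the maximum only over the real upstream reaction nodes of $n_t$, precisely as the lemma prescribes.

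The hard part will not be the algebra—the recursion defining $\bar{w}$ is a textbook longest-weighted-path computation—but rather the bridging argument that the abstract ``time until a node finishes'' from the lemma genuinely equals the physical makespan realized by the runtime. This requires appealing to the execution model: that the \opcode{WU}-based synchronization enforces exactly the dependency edges, that run-to-completion reactions assigned to the same worker never overlap, and that the wait instructions \opcode{DU}, \opcode{WU}, and \opcode{WLT} contribute only their call overhead (already folded into $w_i$) rather than their idle time. Under these modeling assumptions each node's completion time obeys the max-plus recursion that defines $\bar{w}$, which is what closes the gap between the combinatorial quantity $\bar{w}(n_t)$ and the realized worst-case execution time.
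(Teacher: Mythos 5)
Your proposal is correct and follows the same route as the paper, whose entire proof is the single line that the result follows from Lemma~\ref{lem:wcet_up_to_n}. You additionally spell out the identification of the DAG's WCET with the finishing time of the tail node and the execution-model assumptions (dependency enforcement via synchronization, non-overlap within a worker, wait time excluded from $w_i$) that the paper leaves implicit, which is a more careful rendering of the same argument rather than a different one.
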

\begin{proof}
  The proof follows from Lemma~\ref{lem:wcet_up_to_n}.
\end{proof}

\begin{example}
  The WCET of the DAG in Figure~\ref{fig:dag_gen_red} can be captured by
  $\bar{w}(s_4)$, where
  \begin{align*}
    \bar{w}(s_4) & = \max(\bar{w}(a_{1,2}), \bar{w}(m_2)) + w(s4) \\
    & = \max(\bar{w}(s_2) + 20, \bar{w}(c_3) + 15) + 0 \\
    & = \max(\max(75, \bar{w}(c_{1,1})) + 20, \\ & \ \ \ \max(\bar{w}(s_3), \bar{w}(c_{1,2})) + 40) \\
    & = ... = 150
  \end{align*}
\end{example}

\section{Evaluation}

% Generated table at /Users/shaokai/Documents/projects/lingua-franca/lf-pubs/PretVM/artifacts/benchmarks/experiment-data/timing/2024-03-30_00-51-22-RPI4-LOAD_BALANCED_EGS-ALL/table.tex

    \begin{table*}[ht]
        \centering
        \begin{tabular}{lcccccccccc}
        \toprule 
        & & \multicolumn{3}{c}{Average (us)} & \multicolumn{3}{c}{Maximum (us)} &
        \multicolumn{3}{c}{Standard Deviation (us)} \\ 
        \cmidrule(lr){3-5} \cmidrule(lr){6-8} \cmidrule(lr){9-11}
        Program & LoC (\lf) & DY & LB & EG & DY & LB & EG & DY & LB & EG \\ 
        \midrule 
    
\texttt{Philosophers}  & 314  & 18.4 & 12.1 & 11.9 & 355 & 99.2 & 510 & 9.99 & 5.65 & 9.33 \\
\texttt{LongShort}  & 31  & 1.49e+06 & 3.11 & 3.17 & 2.91e+06 & 5.22 & 4.32 & 8.11e+05 & 0.781 & 0.571 \\
\texttt{CoopSchedule}  & 54  & 18.5 & 10.9 & 12.3 & 50.4 & 32.2 & 69 & 7.64 & 4.69 & 6.4 \\
\texttt{Counting}  & 179  & 14.6 & 6.81 & 6.74 & 76.8 & 42.4 & 51.5 & 5.95 & 2.74 & 2.99 \\
\texttt{ThreadRing}  & 217  & 19.5 & 14.8 & 12.3 & 44.2 & 58.3 & 36.2 & 7.18 & 8.05 & 6.28 \\
\texttt{ADASModel}  & 91  & 12.9 & 5.49 & 1.81e+03 & 41.6 & 19.4 & 1.2e+04 & 5.5 & 3.35 & 3.58e+03 \\
\texttt{PingPong}  & 124  & 13.6 & 8.02 & 7.96 & 64.4 & 59.9 & 56.8 & 4.14 & 3.95 & 3.87 \\
\texttt{Throughput}  & 166  & 16.2 & 13.6 & 14.1 & 29.1 & 33.4 & 30.5 & 5.18 & 6.87 & 7.04 \\
        \bottomrule
        \end{tabular} 
        \caption{Average, maximum, and standard deviation of the lags in microseconds of the
        dynamic scheduler (DY), the static \textsc{Load
        Balanced} scheduler (LB), and the static \textsc{Edge Generation}
        scheduler (EG).} 
        \label{tab:accuracy_results}
    \end{table*}

We evaluate the \textit{timing accuracy} of \pretvm by running a set of real-time \lf
benchmarks on Raspberry Pi 4 Model B running Linux.
In this experiment, accuracy is measured in terms of the \textit{lag} of
reaction invocations. For a reaction invocation $i$, its lag, $l_i$, is defined as
\[
l_i = T_i - t_i,
\]
where $T_i$ is the physical time at which the reaction invocation occurs and
$t_i$ the logical timestamp. Intuitively, the smaller the difference between the
physical invocation time and the ideal logical invocation time, the more
time-accurate an execution is. We run the same set of benchmarks using two static 
schedulers, \textsc{Load Balanced} and \textsc{Edge Generation}, and we compare
their performance against \lf's default \textsc{Dynamic Scheduler}.

The eight benchmarks are selected from~\cite{menard2023performance,lin2023lfverifier} with a focus on concurrent
actor design patterns~\cite{imam2014savina},
which can be found in modern Cyber-Physical Systems. For each benchmark program, we annotate
each reaction's WCET estimate using an \texttt{@wcet} attribute. Syntax elements
which are currently not supported by our implementation include nested reactors,
logical and physical actions, and multiports. A logical action is replaced by adding
a connection that links two newly created output and input ports. The reactor hierarchy
is flattened manually and multiports were expanded by hand.

% Data collection
We collect performance data using \lf's tracing utility, which embeds
trace points throughout the codebase to inspect notable events in the
runtime system. Each event contains an event type, a logical timestamp, and a
physical timestamp. In this experiment, we only collect events for the starting points
of reactions. This choice ensures that the overhead of tracing stays
uniform for all three schedulers under test. To avoid distortions in our measurements, 
we have also increased the size of the trace buffer to minimize data being written to disk 
during execution.

% Fairness
\lf schedulers need to wait for the physical time to surpass the next
reaction invocation's logical timestamp before processing the invocation. To ensure a fair comparison between the schedulers, we implement the wait
mechanism in all three schedulers under test using busy wait, replacing the
sleep mechanism found in the default dynamic scheduler. By doing this, we
remove timing variation from operating system's sleep mechanism and
attribute the difference in performance to scheduler designs.

\begin{figure*}
  \centering
    \begin{subfigure}[t]{0.45\textwidth}
        \centering
        \includegraphics[width=\textwidth]{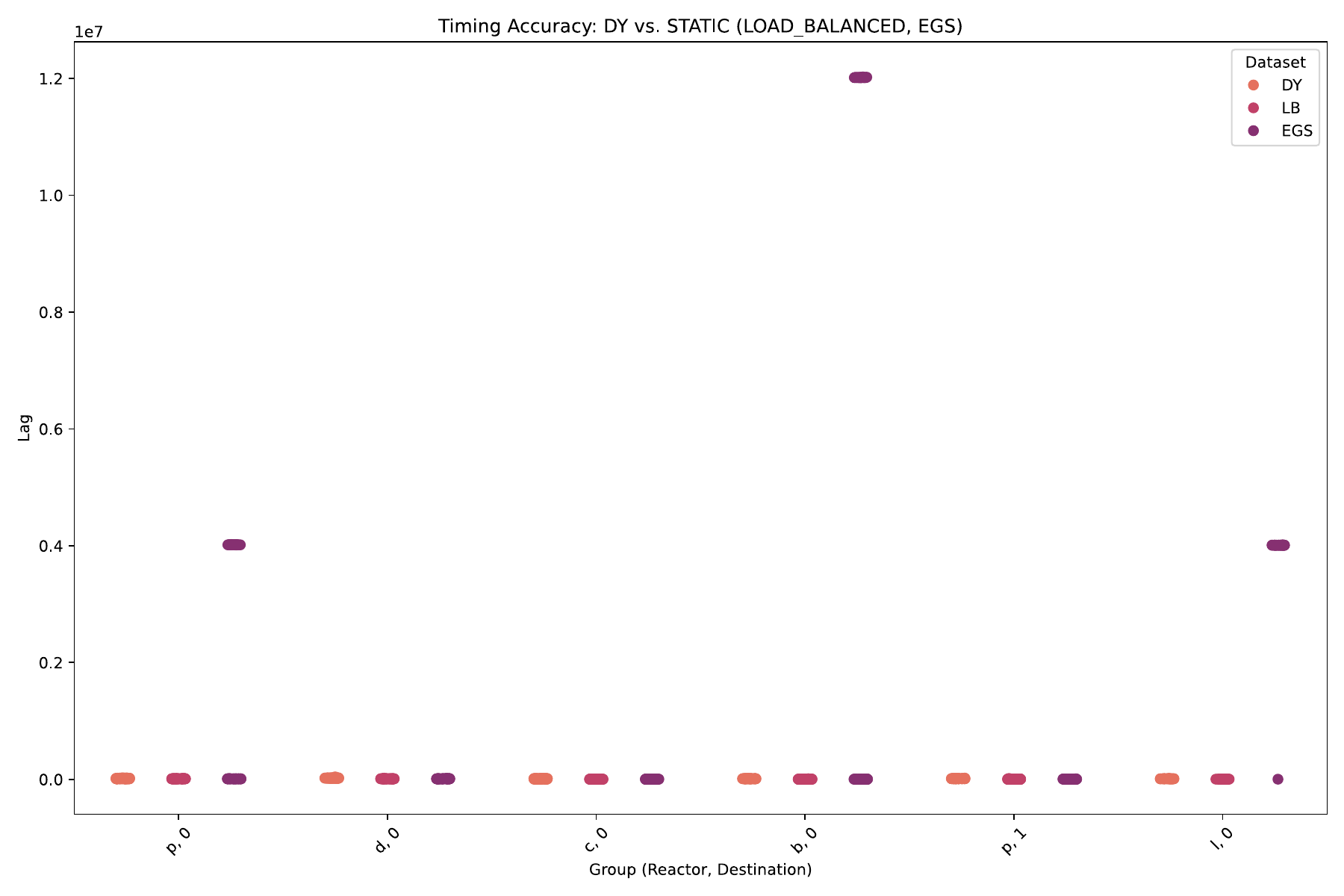}
        \caption{\texttt{ADASModel}}
    \end{subfigure}%
    ~ 
    \begin{subfigure}[t]{0.45\textwidth}
        \centering
        \includegraphics[width=\textwidth]{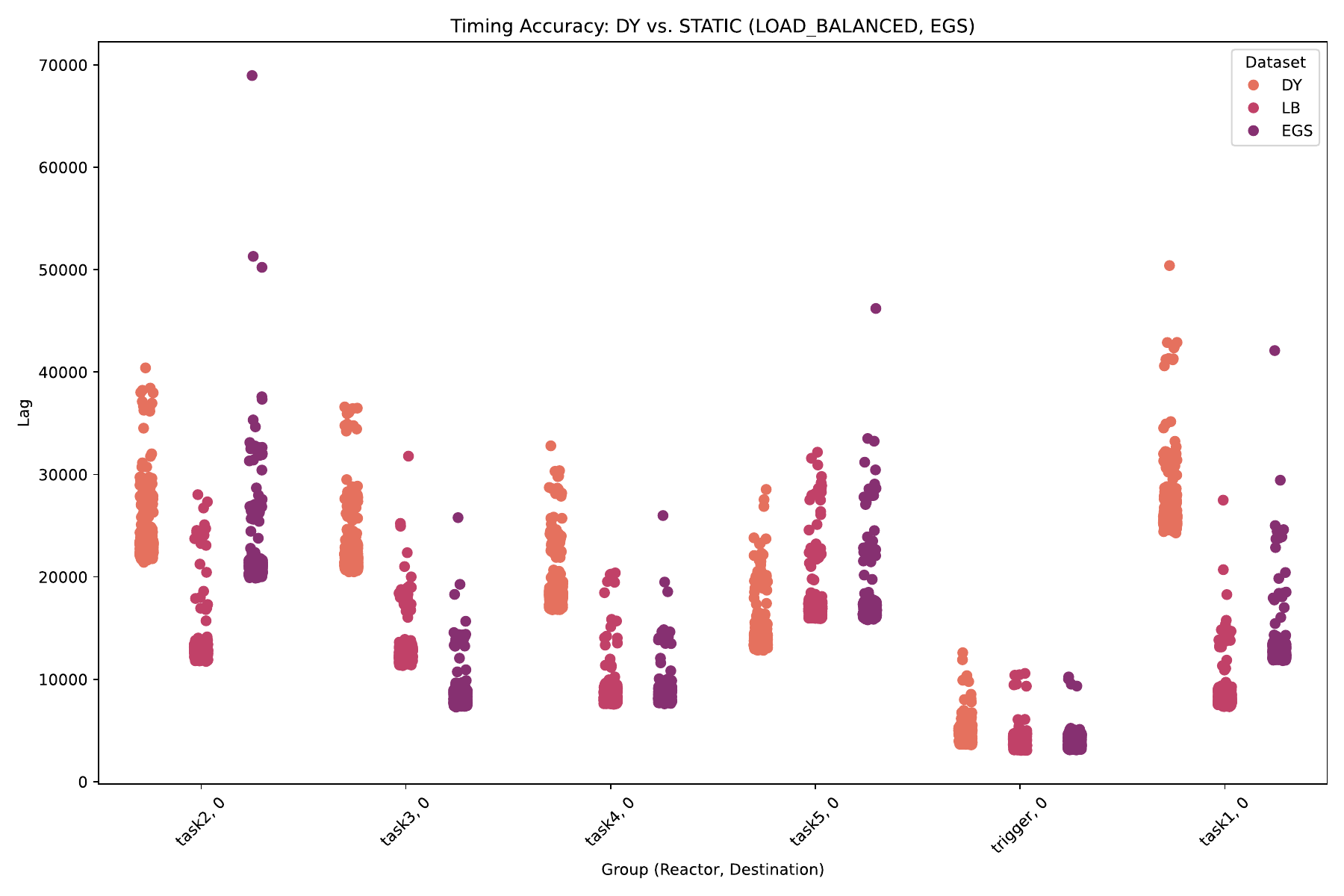}
        \caption{\texttt{CoopSchedule}}
    \end{subfigure}
    ~ 
    \begin{subfigure}[t]{0.45\textwidth}
        \centering
        \includegraphics[width=\textwidth]{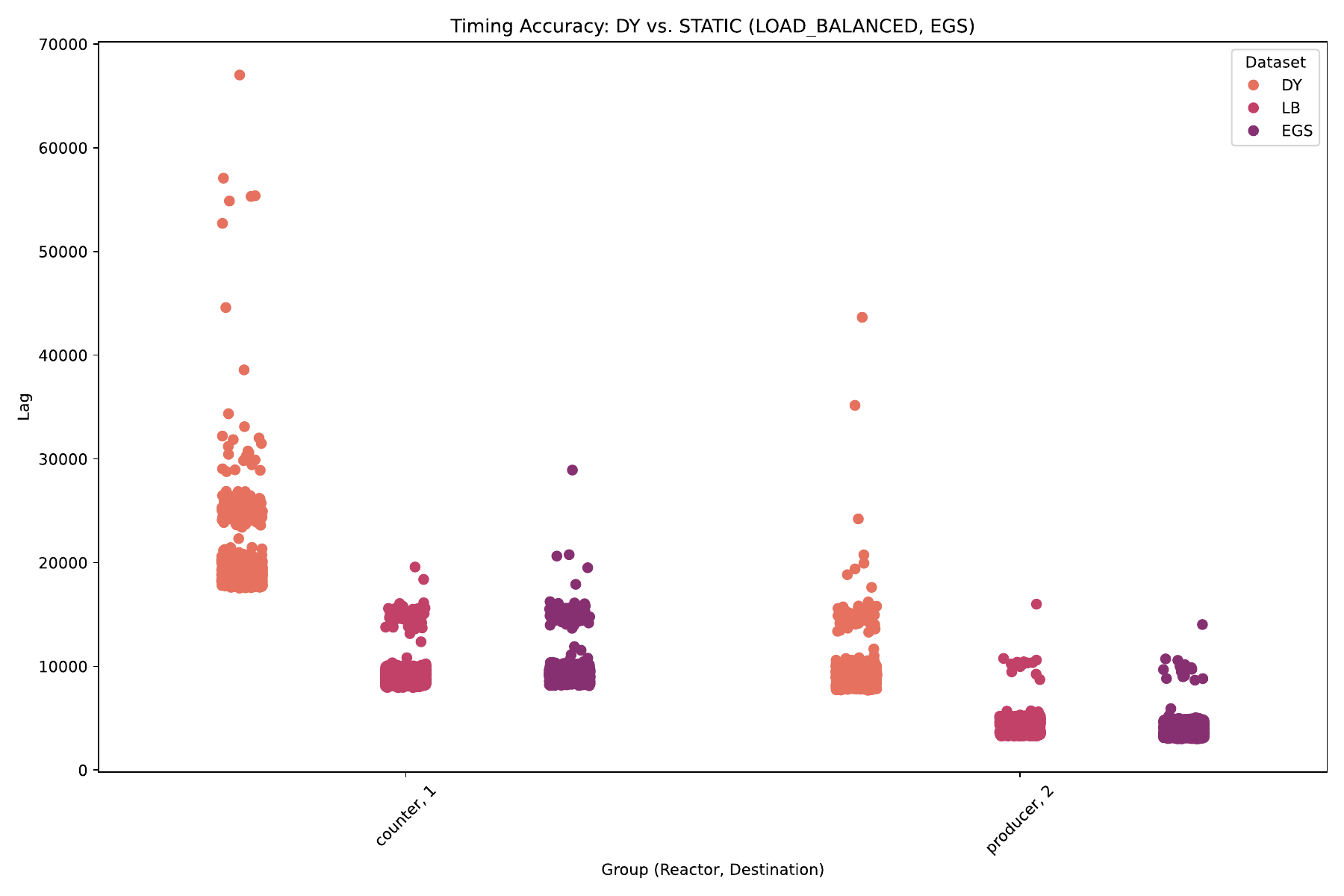}
        \caption{\texttt{Counting}}
    \end{subfigure}
    ~ 
    \begin{subfigure}[t]{0.45\textwidth}
        \centering
        \includegraphics[width=\textwidth]{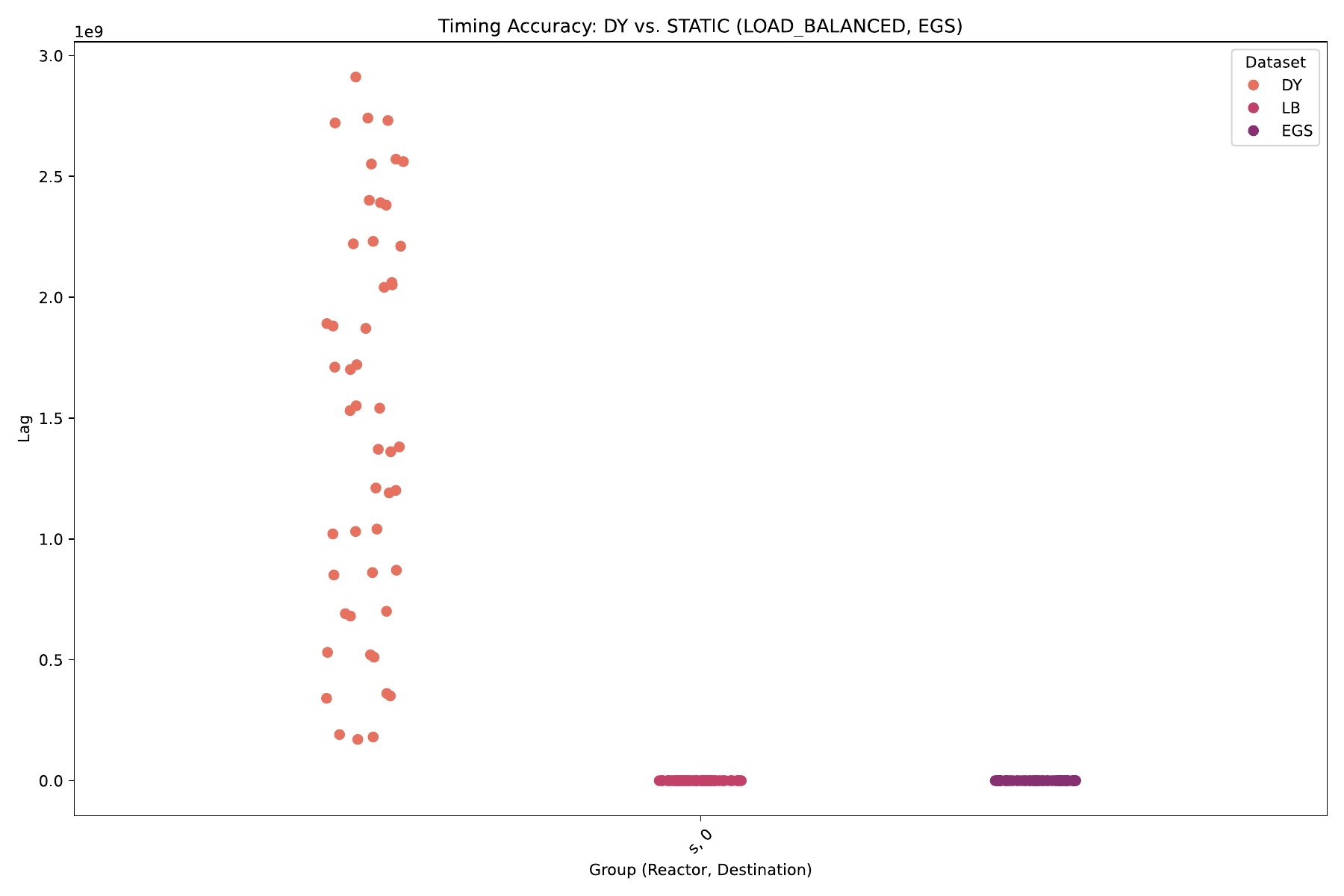}
        \caption{\texttt{LongShort}}
    \end{subfigure}
    ~ 
    \begin{subfigure}[t]{0.45\textwidth}
        \centering
        \includegraphics[width=\textwidth]{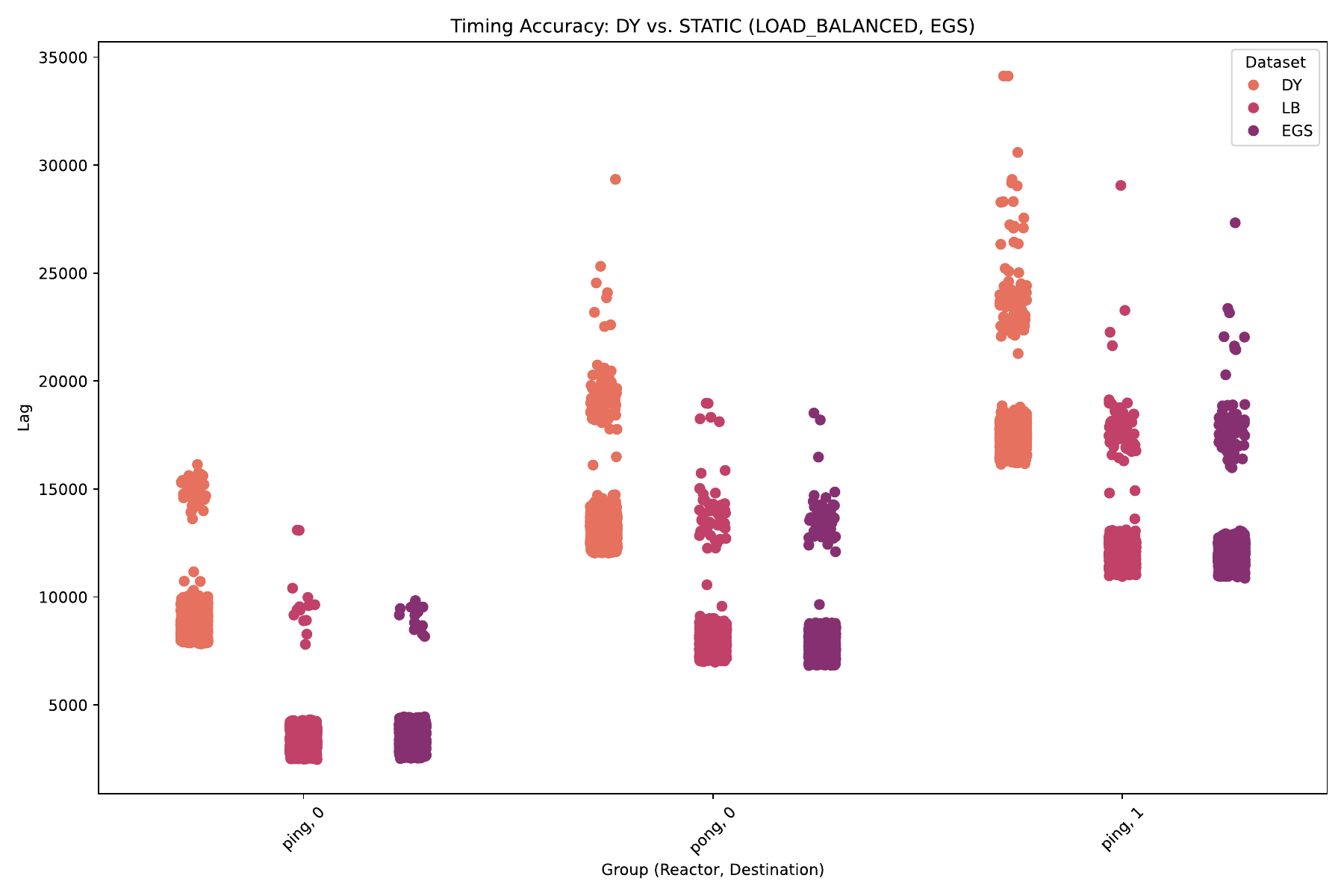}
        \caption{\texttt{PingPong}}
    \end{subfigure}
    ~ 
    \begin{subfigure}[t]{0.45\textwidth}
        \centering
        \includegraphics[width=\textwidth]{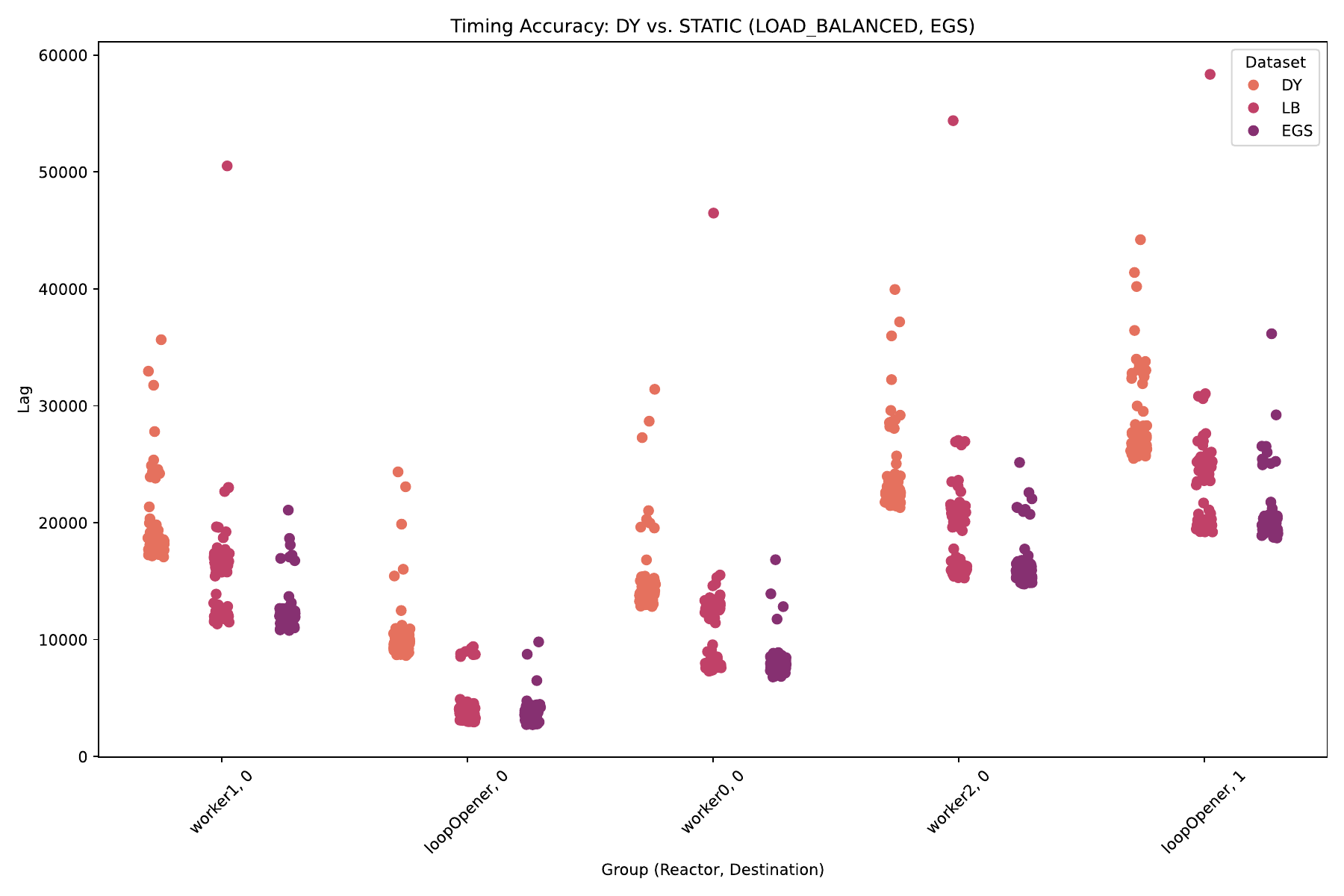}
        \caption{\texttt{ThreadRing}}
    \end{subfigure}
    ~ 
    \begin{subfigure}[t]{0.45\textwidth}
        \centering
        \includegraphics[width=\textwidth]{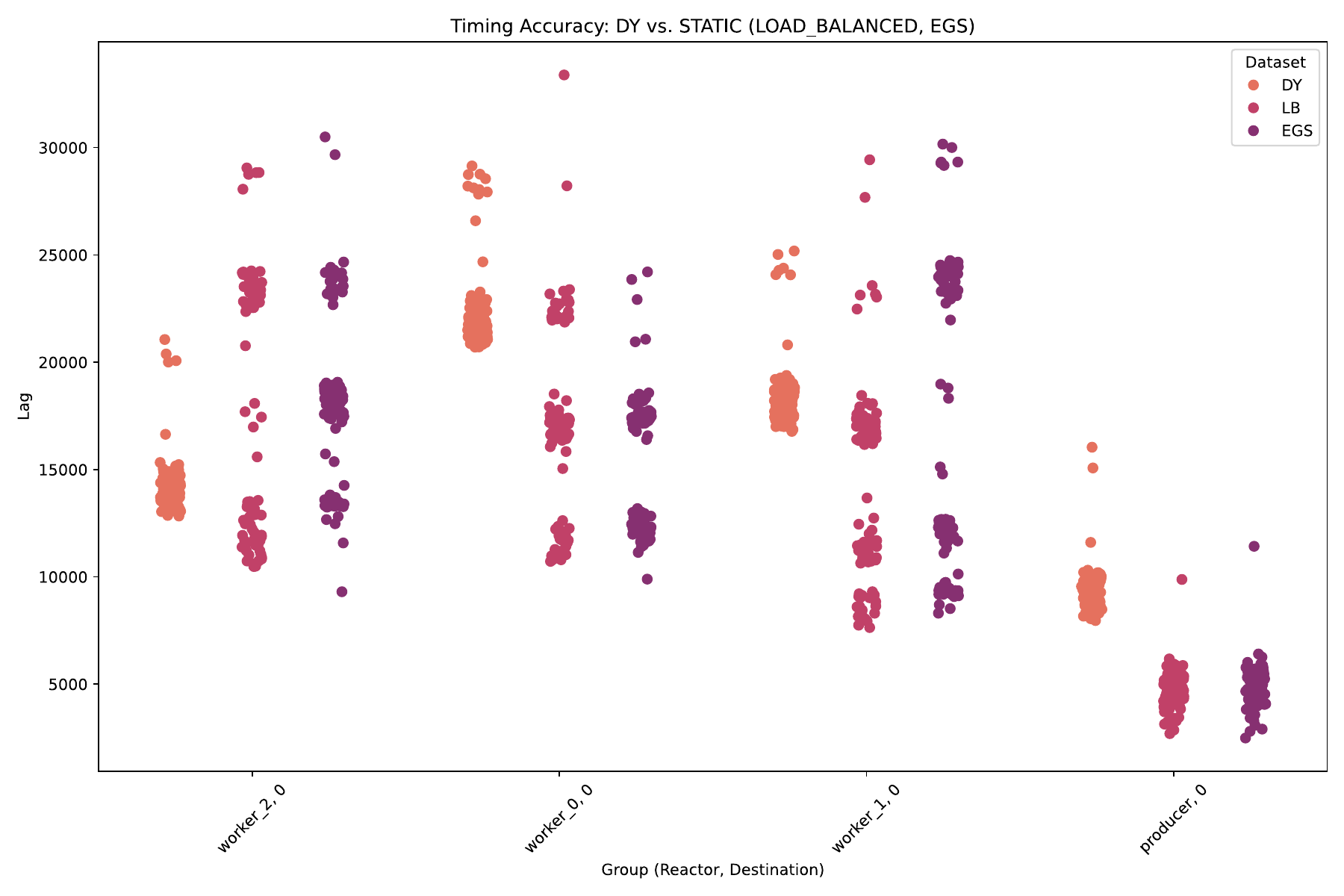}
        \caption{\texttt{Throughput}}
    \end{subfigure}
    ~ 
    \begin{subfigure}[t]{0.45\textwidth}
        \centering
        \includegraphics[width=\textwidth]{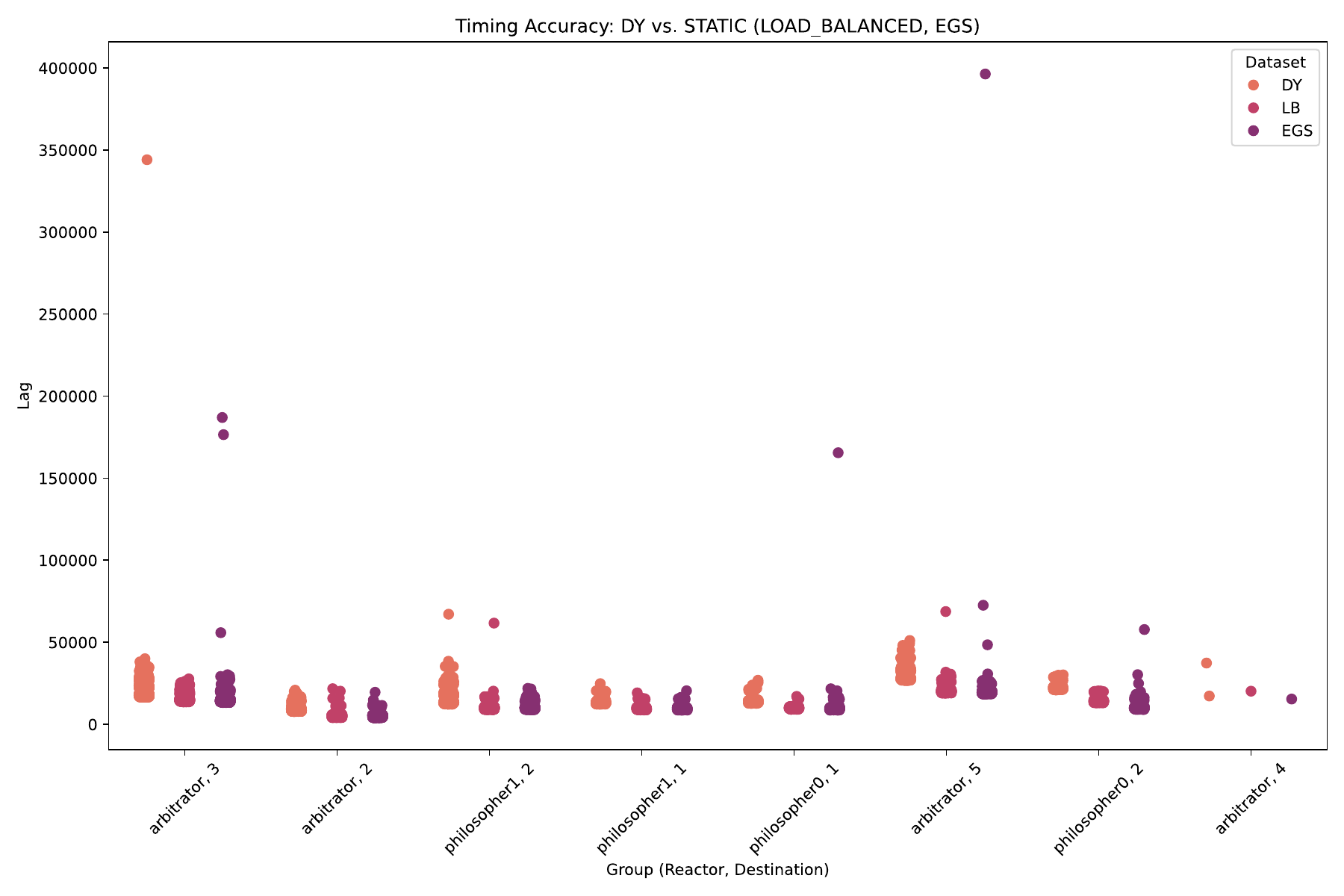}
        \caption{\texttt{Philosophers}}
    \end{subfigure}
    \caption{Timing accuracy benchmark results. X-axis shows reaction names. Each reaction contains three distributions, DY, LB, and EG, respectively.  Y-axis is lag in nanoseconds.}
    \label{fig:accuracy-plots}
\end{figure*}

Table~\ref{tab:accuracy_results} shows an overview of the timing accuracy
results. Compared to the dynamic scheduler, the static \lb
scheduler has smaller average and maximum lags on all benchmarks, and it has a
smaller standard deviation on all benchmarks except \texttt{ThreadRing}.

Figure~\ref{fig:accuracy-plots} shows strip plots of lags during the
periodic phase of
each program.
For most reactions, \lb and \egs have smaller lags than the dynamic scheduler.
The occasional outliers shown in the strip plots can be due to several reasons,
including interrupts from the underlying OS and the flushing of trace buffers to
disk.

In \texttt{ADASModel}, \egs shows three clusters of outliers. This is due to
\egs's scheduling strategy involving adding edges to the DAG, which tends to
delay the releasing of tasks in exchange for schedulability guarantees. We
note that the goal of using \egs is not to minimize lags, but to use its analysis
capability to ensure that a
DAG is schedulable. In this case, larger lags from \egs seem tolerable. In
general, \lb is better at miniming lags than \egs and the dynamic scheduler.

Notably, for the \texttt{LongShort} benchmark, \lb and \egs
significantly outperform the dynamic scheduler.
This performance gain can be explained by the fact that the quasi-static schedulers
keep track of the current logical time for \textit{each} reactor,
meaning that every reactor can advance its logical time independently as long as the
core \lf semantics is respected, \ie each reactor processes events in timestamp order.
The dynamic scheduler, on the other hand, uses a single variable to track
logical time for \textit{all} reactors, ensuring that all reactors advance time
together. The \texttt{LongShort} benchmark has a design pattern that combines
infrequent, long-running reactions with frequent, short-running reactions. The
dynamic scheduler struggles because it performs a \term{global barrier synchronization}
at the end of each tag. The barrier makes reactions that are to occur at the
next tag wait on the long-running reactions, even if there are no more actual
data dependencies to fulfill.
In the quasi-static schedulers, the capability to advance time
individually for each reactor effectively prevent waiting unnecessarily,
resulting in significantly reduced lags. 
We plan to compare our \pretvm approach with ongoing research on easing global
barrier synchronization in dynamic schedulers in the future.

\section{Related Work}
Our work is largely inspired by the Embedded
Machine~\cite{Henzinger:02:EMachine}, which serves as an execution backend for
the Giotto language~\cite{Henzinger:01:Giotto,Henzinger:03:Giotto}, implementing
the LET model~\cite{kirsch2012let}. 
An emerging extension of LET is the
System-level LET (SL LET)
model, which scales LET to distributed settings by introducing a notion
of ``time zones''~\cite{ernst2018sllet}.
The reactors model~\cite{Lohstroh:2019:CyPhy} in \lflong generalizes LET by
distinguishing logical time from physical time~\cite{LeeLohstroh:22:LET}.
Another emerging model offering similar generalization is the Sparse Synchronous
Model (SSM)~\cite{hui2022ssm}, the semantics of which can be facilitated using
RP2040's PIO instruction set~\cite{hui2023timestamp}, sharing similarity with
our approach.
The difference here is that it seems challenging to generate quasi-static
schedules for SSM, while we show that it is feasible for a subset of \lf.

In the real-time scheduling literature, various hardware and software
complexities have been considered in real-time systems, leading to a range of
Directed Acyclic Graph (DAG) representations. 
% These representations span from a single DAG modeling a single task~\cite{Baruah12, Graham69, he2019intra, zhao2020dag, he2021response, he2022bounding, sun2023edge} to DAGs that handle multiple tasks with different periods~\cite{Baruah14, melani2015response, Pathan18, verucchi2020latency, zhao2022dag}.
One of the earliest contributions to DAG scheduling introduced what is now known
as Graham's bound. It provides an upper bound on the response time of a task
based on the longest path within the DAG and the DAG's volume~\cite{Graham69}. 
% This bound is applicable to any work-conserving scheduling policy on
% homogeneous multicore platforms. 
% Recently, He et al.~\cite{he2022bounding} revealed the limitations of Graham's bound and proposed a tighter bound that considers multiple long paths instead of just the longest one.
% Melani et al.~\cite{melani2015response} extended Graham's bound to encompass systems with multiple DAG tasks, taking into account inter-task interference. They investigated the use of global earliest-deadline-first (EDF) and fixed-priority (FP) scheduling policies in this context.
In recent years, He et al.\cite{he2019intra} proposed a method of prioritizing subtasks within the longest paths to reduce response times and enhance schedulability. 
Subsequently, Zhao et al.\cite{zhao2020dag} improved the priority assignment strategy at the subtask level by considering subtask dependencies.
Different from the above approaches, Sun et. al~\cite{sun2023edge} proposed a
new DAG scheduler named Edge Generation Scheduling (EGS). 
% Instead of converting
% the DAG scheduling challenge into a prioritized method, they proposed adding
% edges to the original graph until the DAG becomes trivially schedulable. We
% integrated the EGS scheduler into our \pretvm-based \lf compiler due to its good performance
% in achieving high schedulability and reducing the number of workers.

\section{Conclusion}
In this paper, we present the Precision-Timed Virtual Machine, \pretvm, which
facilitates the execution of quasi-static schedules compiled from a subset of
\lf programs, supporting a top-down, model-based design flow for real-time
software in Cyber-Physical Systems.
Our approach is amenable to predicting the WCETs of \lf programs' hyperperiods
given assumptions on the WCETs of individual reactions and virtual instructions. 
We evaluate our 
approach using design patterns found in modern CPSs, implemented in
\lf. The results show that \pretvm delivers time-accurate deterministic
execution.

% Acknowledgment
\section*{Acknowledgment}
The work in this paper was supported in part by the National Science Foundation
(NSF), award \#CNS-2233769 (Consistency vs. Availability in Cyber-Physical
Systems), by DARPA grant FA8750-20-C-0156, by Intel, and by the iCyPhy Research
Center (Industrial Cyber-Physical Systems), supported by Denso, Siemens, and
Toyota. 

This work was also supported, in part, by the German Federal Ministry of
Education and Research (BMBF) as part of the program “Souverän. Digital.
Vernetzt.”, joint project 6G-life (16KISK001K), and by the German Research
Council (DFG) through the InterMCore project (505744711).

Mirco Theile and Binqi Sun were supported by the Chair for Cyber-Physical
Systems in Production Engineering at TUM. 

The authors thank Linh Thi Xuan Phan and the anonymous reviewers for
providing helpful feedback.
Shaokai Lin thanks Zitao Fang and Yang Huang for their collaboration on a course
project for CS267 (Application of Parallel Computers) at UC Berkeley, which
provided early inspiration for this work.

% Bibliography
\bibliographystyle{IEEEtran}
\bibliography{refs.bib}
\vfill

\clearpage
% \section*{Appendix}

\end{document}